\documentclass{elsarticle}
\usepackage{graphicx} % Required for inserting images
\usepackage{amssymb}
\usepackage{amsmath}
\usepackage{amsthm}
\usepackage{hyperref}
\usepackage{todonotes}
\usepackage{algorithm}
\usepackage{algpseudocodex}
\usepackage{tikz}
\usepackage{pgfplots}
\pgfplotsset{compat=1.18}
\usepackage{lscape}
\usepackage{comment}

\newtheorem{definition}{Definition}
\newtheorem{theorem}{Theorem}
\newtheorem{lemma}{Lemma}

\begin{document}

\begin{frontmatter}   

\title{Finding Minimal Clusters in st-DAGs}
\author{Ulrich Vogl, Markus Siegle}
\address{Universit\"at der Bundeswehr M\"unchen, Department of Computer Science,\\ 85577 Neubiberg, Germany\\
%{\bf Version of \today}
}
\ead{ulrich.vogl@unibw.de, markus.siegle@unibw.de}

\begin{abstract}
Directed Acylic Graphs with a single entry vertex and a single exit vertex (st-DAGs) have many applications.
For instance, they are frequently used for modelling flow problems or precedence conditions among tasks, work packages, etc..
This paper presents an algorithm for finding special types of subgraphs in such st-DAGs, called `clusters'.
Knowing the clusters of a given st-DAG is very useful during DAG analysis. 
Clusters are characterized by a kind of synchronizing behaviour at their entry border and at their exit border.
In this context, we introduce the notion of `syncpoint', a type of synchronisation point within a DAG, and for a given st-DAG we construct a second DAG, called MSP-DAG, whose edges are given by the precedence relation among maximum size syncpoints (MSPs).
Our new cluster finding algorithm searches for clusters between potential pairs of enclosing MSPs.
The efficiency of the algorithm stems from the fact that it works on the MSP-DAG, which is usually much smaller than the original st-DAG.
The paper includes a thorough complexity analysis of the algorithm's runtime, which turns out to be quadratic in the number of DAG vertices and exponential in the number of MSP-DAG vertices.
There is also a section reporting on experiments with randomly generated DAGs, which shows the practical applicability of the algorithm and confirm our theoretical findings.
\end{abstract}

\begin{keyword}
Directed Acyclic Graph \sep
st-DAG \sep
Subgraph \sep
Cluster \sep
Syncpoint
%% keywords here, in the form: keyword \sep keyword
\end{keyword}

\end{frontmatter}

\section{Introduction}

Directed acyclic graphs (DAGs), though conceptually simple,  are a powerful and widespread modelling formalism with a wide range of applications. They are frequently used for project planning, workflow analysis, the modelling of concurrent task execution, as dependency graphs in scheduling and syntactical program analysis, for studying causal inference, and many more areas of application.

The authors have been working with DAGs in the context of project planning, where the vertices of the DAG represent work packages and the edges represent causal dependencies between them.
An edge $(u,v)$ from vertex $u$ to vertex $v$ thus means that $v$ may only start once $u$ is completely finished.
In consequence, a vertex can only become active once {\em all} of its predecessors have finished.
Obviously, in this context, redundant edges -- that is, edges $(u,v)$ where there exists an alternative path from $u$ to $v$ -- are meaningless and can be ignored.
An important goal of project planning is to determine the likely overall execution time of a project represented by a DAG.
Since work package execution times may vary considerably depending on various factors, probability distributions are used to represent the random execution times of the individual work packages (i.e., of the vertices of the DAG).
The overall execution time is therefore also a random variable whose distribution needs to be computed.

In case the given DAG is series-parallel (for the definition of this property see Definition~\ref{def:serpar} in Section~\ref{sec:defs_and_props} below), calculating the overall execution time distribution from the individual vertex distributions can be performed by a straight-forward stepwise reduction scheme:
Iteratively applying the convolution or maximum operations on pairs (or sets) of intermediate subgraph distributions, thus effectively reducing the original DAG to a single vertex, will lead to the overall resulting distribution.
However, for DAGs which are not series-parallel, or DAGs which have subgraphs which are not series-parallel, this reduction scheme does not work, since the non-series-parallel parts of the graph cannot be resolved.
Therefore, in our earlier works \citep{vogl:2017}
and \citep{vogl:2023}, we developed the strategy of isolating the non-series-parallel parts, the so-called {\em minimal complex clusters}, within a given DAG and analyze their execution time distribution with the help of a probabilistic model checking tool such as PRISM \citep{KNP:2011}.
The cluster is then replaced by a single vertex with the same runtime distribution.
However, this type of cluster analysis is state space based and therefore expensive in terms of runtime and memory.
Hence, it is important that the clusters to be analyzed by model checking are as small as possible, which motivates the search for clusters of {\em minimal} size.
Without going further into the details of the procedure just described, this makes it clear that the precise definition of minimal series-parallelly {\em non-}reducible subgraphs (the minimal complex clusters), and the question of how to identify them efficiently within the context of a potentially large DAG, are important research questions.
This is precisely what the present paper is about.

The rest of the paper, which is based on the PhD thesis \citep{vogl:2025} but focuses more strictly on the graph-theoretic questions, is structured as follows:
After discussing some related work in Section~\ref{subsec:relwork}, we start with the necessary basics in Section~\ref{sec:defs_and_props}, which includes the definitions of the notions of {\em cluster} and {\em complex cluster}.
Section~\ref{sec:syncpoints} introduces the important concept of {\em syncpoints}, which represent -- as the name implies -- a kind of synchronisation points within DAGs, which may mark the entry or exit points of clusters.
Once all maximum-size syncpoints (all MSPs) in a given DAG are identified, a second, typically much smaller DAG, the so-called maximum syncpoint DAG (MSP-DAG), can be constructed, which plays a  key role in our approach.
In Section~\ref{sec:clustfind}, our algorithm for finding minimal clusters is presented in pseudocode and explained in some detail.
The section also contains a complexity analysis of the algorithm which shows that the overall runtime is quadratic in the number of vertices and exponential in the number of maximum syncpoints.
Section~\ref{sec:appl} presents a set of experiments, showing that the new algorithm is practicable for the analysis of st-DAGs of considerable size.
The paper concludes with Section~\ref{sec:conclusion}, which summarizes the major results and gives some pointers to possible future work.

\subsection{Related Work}
\label{subsec:relwork}

There are numerous publications dealing with the overall time-span analysis of DAGs whose vertices are equipped with stochastic execution time distributions.
Such techniques were developed, for example, for the analysis of parallel computer programs, running on multiprocessor systems, where the vertices of the DAG represent tasks (procedures, subroutines) to be executed.
In this context, the early tool PEPP \citep{hartleb:1993,klar:1995} provided a graphical user interface for model specification and a range of analysis techniques.
In the case of a series-parallel DAG, PEPP is able to reduce it to a single vertex with the same execution time distribution, by iteratively merging pairs (or sets) of vertices that are serially (or parallelly) linked (for a definition of `series-parallel' see Definition~\ref{def:serpar} below).
However, a DAG that is not completely series-parallel can only be treated exactly if all vertex execution times are exponentially distributed, and only expected overall times (not distributions) can be calculated. For the general non-series-parallel case, PEPP is only able to compute upper and lower bounds on the overall distribution.
Apart from identifying pairs (or sets) of serially or parallelly linked vertices, PEPP is not able to exploit any special structural properties of a given DAG.
In contrast, our approach of finding special structures, which we call clusters, and replacing their execution time distribution by a single vertex, is always applicable, efficient, and leads to precise overall time-span distributions.

More recently, techniques for determining response time distributions of complex workflows and the related tool EULERO have been presented \citep{carnevali:2023,carnevali:2022}.
The authors employ exact analysis in combination with stochastic bounding techniques, the latter being necessary for workflows which violate a `well-formed nesting' property, which essentially amounts to not being series-parallel.
However, the structure of the workflow model at hand is not determined by analysis, but has to be provided by the user as part of the model specification, which takes the form of a so-called hierarchical structure tree. In that way, EULERO avoids the problem of structurally analyzing a flat graph model.

From the graph-theoretic point of view,
characterizing special structural features of graphs, or isolating substructures having special properties, has received surprisingly little attention in the literature.
Of course, connectedness issues (such as finding the connected components of undirected graphs, or the strongly connected components of DAGs) are well-known problems for which efficient algorithms exist \citep{Tarjan:1972,McLendon:2005}.
%It is possible to test the strong connectivity of a graph, or to find its strongly connected components, in linear time
%
Related to the question of connectedness, in an {\em undirected} graph, there is the notion of clique (an $r$-clique is a complete subgraph with $r$ vertices), and the clique decision problem (deciding whether a graph has a clique of a certain size) is famous for being NP-hard \citep{Pardalos:1995, Wu:2015}.
Also, many different types of clustering problems have been studied intensively, where a cluster is a subset of vertices with high internal and low external connectivity (notice that our notion of `cluster', as defined in this paper, is different).
Clustering has numerous applications in data analysis for detecting patterns, enclosing communities, etc., within large networks \citep{schaeffer:2007}.

In contrast to the mentioned work, this paper focuses on the problem of finding special subgraphs within DAGs which are characterized by formal conditions on their entry resp.\ exit points.
We call such subgraphs minimal complex clusters (not to be confused with the much more general notion of clustering mentioned in the previous paragraph).
As already pointed out, this new type of cluster plays a key role in our approach to analyzing the overall execution time distributions of DAGs \citep{vogl:2017,vogl:2023}.

\section{Definitions and properties of st-DAGs}
\label{sec:defs_and_props}

\begin{definition}
\label{def:DAG}
A {\em directed acyclic graph} with one source and one target vertex {\em (st-DAG)} is a tuple $G=(V,E)$ where $V$ is a finite nonempty set of vertices and $E  \subseteq V \times V$ is a set of directed edges. There are no cycles, i.e.\ no nonempty sequences of directed edges leading back to the same vertex. The graph $G$ has a single source vertex $s$ (without any incoming edge) and a single target vertex $t$ (without any outgoing edge).
\end{definition}

According to Def.~\ref{def:DAG}, the smallest st-DAG consists of a single vertex and no edge, in which case $s=t$ and $E=\emptyset$.

\begin{definition}
\label{def:path}
In an st-DAG, an {\em edge-path (or just path, for short)} $ p$ {\em of length $l \geq 1$} is defined as a nonempty sequence of edges $(e_1,\dots,e_l)$ where $e_i=(u_i,v_i) \in E$ ($i=1,\dots,l$), such that $u_{i+1}=v_i$ ($i=1,\dots,l-1$). An {\em st-path of length $l$} is a path with $u_1=s$ and $v_l=t$.
For such an edge-path $p$, the corresponding {\em vertex-path} is the sequence of vertices $u_1,u_2, \dots, u_l, v_l$.
In an st-DAG $G=(V,E)$, an edge $e=(u,v) \in E$ is called {\em redundant} iff there exists an alternative path from $u$ to $v$ not containing $e$.
\end{definition}

In the sequel, if we write DAG we always mean st-DAG and we will always assume that it does not contain any redundant edges.
Should a given DAG have redundant edges, we can safely delete them, i.e.\ we can work on its (unique) transient reduction \citep{Aho:1972}.

\begin{definition}
\label{def:twins}
Given a DAG $G=(V,E)$, two or more vertices $v_1, v_2, \dots \in V$ are called {\em in-twins (out-twins)} iff they have the identical set of predecessor (successor) vertices.
\end{definition}

\begin{definition}
\label{def:serpar}
A DAG $G=(V,E)$ is called {\em series-parallel (or completely series-parallelly reducible)} if it can be reduced to a single vertex by a sequence of the following two reduction steps:
\begin{itemize}
    \item Two vertices $u$ and $v$, where $v$ is the only successor of $u$, and $u$ is the only predecessor of $v$, can be reduced serially to a single vertex.
    \item A set of $k$ vertices $v_1, \dots, v_k$ (for $k \geq 2$) which are in-twins as well as out-twins can be reduced parallelly to a single vertex.
\end{itemize}
\end{definition}
The notion of series-parallel graph in Def.~\ref{def:serpar} is equivalent to the one in \citep{valdes:1979} and \citep{klar:1995}, but \citep{valdes:1979} uses a constructive definition instead of ours based on reduction steps. 

\begin{definition}
\label{def:subgraph}
Given a DAG $G=(V,E)$, the tuple $G'=(V',E')$, with $V'\subseteq V$ and $E' \subseteq E$, is called a {\em subgraph} of $G$, iff $\forall e =(u,v) \in E': u \in V' \wedge v \in  V'$ (no dangling edges).
A vertex $v\in V'$ is called entry (exit) vertex of $G'$, if it has a predecessor (successor) in $V \setminus V'$, or if it is the source node $s$ (the target node $t$).
Given a set of vertices $V' \subseteq V$, the {\em induced subgraph} of $V'$ is the maximal subgraph of $G$ with vertex set $V'$. 
\end{definition}

\begin{definition}
\label{def:cluster}
Given a DAG $G=(V,E)$ and a vertex set $V' \subseteq V$, let $G'=(V',E')$ be the induced subgraph of $V'$. Let $A \subseteq V'$ resp.\ $B \subseteq V'$ (with $|A| \geq 2$ and $|B| \geq 2$ and $A \cap B = \emptyset$) be the set of entry resp.\ exit vertices of $G'$. $G'$ is called a {\em cluster} iff $A$ is a set of in-twins and $B$ is a set of out-twins. A cluster $G'$ is called {\em complex cluster} iff it cannot be reduced by any serial or parallel reduction step. A cluster is called {\em minimal cluster} if it does not possess a proper subgraph which is also a cluster. 
\end{definition}

According to Def.~\ref{def:cluster}, a cluster has at least four vertices.
Figure~\ref{fig:stDAG} shows an st-DAG which exemplifies the definitions of serial/parallel reducibility as well as the notion of a cluster.
As we explained in the introduction, minimal complex clusters play an important role during the analysis of a DAG, so we need efficient techniques for finding them.
Syncpoints, which are introduced in the next section,
will be of great help for identifying clusters.

\begin{figure}
    \centering
    \includegraphics[width=0.35\linewidth]{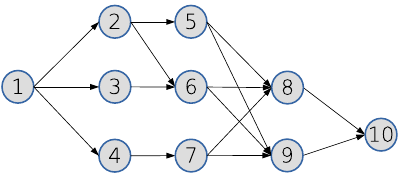}
    \caption{An st-DAG with $s=1$ and $t=10$. Vertices 4 and 7 can be reduced serially. Vertices 8 and 9 can be reduced parallelly. The subgraph induced by vertex set $\{ 2,3,5,6\}$ is a minimal complex cluster.
    The two subgraphs induced by vertex sets $\{ 2,3,4,5,6,7\}$ and $\{ 2,3,4,5,6,7,8,9\}$ are also clusters, but they are neither complex (since serial/parallel reductions inside those subgraphs are possible) nor minimal.}
    \label{fig:stDAG}
\end{figure}

\section{Syncpoints and related concepts}
\label{sec:syncpoints}

\begin{definition}
\label{def:syncpoint}
For a DAG $G=(V,E)$, consider a nonempty set of edges $\mathcal{X}=\{ e_1, \dots, e_k\} \subseteq E$, with $S = \{v \in V \; | \; v $ is end vertex of some $e \in \mathcal{X} \}$ and $P = \{v \in V \; | \; v $ is start vertex of some $e \in \mathcal{X} \}$. $\mathcal{X}$ is called a {\em Fullsyncpoint (FSP)} iff $\mathcal{X}$ contains all edges leading from $P$ to $S$ and
\begin{itemize}
\item[a)] $(|S| = 1$ or $S$ is a set of in-twins$)$ and $P$ is the (common) set of predecessors of vertices of $S$, and
\item[b)] $(|P|=1$ or $P$ is a set of out-twins$)$ and $S$ is the (common) set of successors of vertices of $P$. 
\end{itemize}
If only a) holds and $|S| \geq 2$ then $\mathcal{X}$ is called a {\em Forward-Halfsyncpoint (FHSP)}. If only b) holds and $|P| \geq 2$ then $\mathcal{X}$ is called a {\em Backward-Halfsyncpoint (BHSP)}.

For a syncpoint $\mathcal{X}$ (which is an edge set), we denote by $P_\mathcal{X}$ the set of its predecessor/start vertices and by  $S_\mathcal{X}$ the set of its successor/end vertices. 
\end{definition}

The following lemma establishes the important connection between clusters as defined in Definition~\ref{def:cluster} and syncpoints from Definition~\ref{def:syncpoint}.

\begin{lemma}
\label{lem:clust_SP}
In a DAG $G=(V,E)$, let the subgraph $G'=(V',E')$ be a cluster. Let $A \subseteq V'$ be the entry vertices of $G'$ and let $B \subseteq V'$ be the exit vertices of $G'$. Then the set of incoming edges of $A$ is a FSP or a FHSP, and the set of outgoing edges of $B$ is a FSP or a BHSP.
\end{lemma}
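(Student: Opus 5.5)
The plan is to verify Definition~\ref{def:syncpoint} directly for the edge set $\mathcal{X}$ of all edges entering $A$. Condition a) will follow from the in-twin property of $A$, and the claim will then follow from the case distinction built into that definition. The statement about $B$ is the mirror image, obtained by reversing all edges and swapping $s$ with $t$ and in-twins with out-twins.

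\emph{Step 1: $s \notin A$ and $\mathcal{X}$ is nonempty.} By Definition~\ref{def:cluster}, $A$ is a set of in-twins with $|A|\ge 2$, so all vertices of $A$ have a common predecessor set, which I call $P$. Suppose $s\in A$. Then $P=\emptyset$, so some other vertex $a\in A$, $a\neq s$, would have no incoming edge. That contradicts the fact that $s$ is the unique source of the st-DAG. Hence $s\notin A$, $P\neq\emptyset$, and every $a\in A$ has at least one incoming edge. So $\mathcal{X}$ is a nonempty edge set.

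\emph{Step 2: identify $S_\mathcal{X}$ and $P_\mathcal{X}$ and check condition a).} By construction $S_\mathcal{X}\subseteq A$. By Step 1 every vertex of $A$ is the end vertex of some edge in $\mathcal{X}$, so $S_\mathcal{X}=A$. Likewise $P_\mathcal{X}$ is the union of the predecessor sets of the vertices of $A$, which is exactly $P$. Every edge from $P$ to $S_\mathcal{X}$ ends in $A$, so it lies in $\mathcal{X}$, as the definition requires. Condition a) now holds: $S_\mathcal{X}=A$ is a set of in-twins and $P_\mathcal{X}=P$ is their common predecessor set.

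\emph{Step 3: conclude for $A$.} Condition a) holds and $|S_\mathcal{X}|=|A|\ge 2$. If condition b) also holds, $\mathcal{X}$ is an FSP. Otherwise only a) holds, and by definition $\mathcal{X}$ is an FHSP.

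\emph{Step 4: the exit border.} For $B$ I take $\mathcal{Y}$ to be the set of edges leaving $B$ and repeat Steps 1--3 with the roles reversed. Since $t$ is the unique sink and $B$ is a set of out-twins with $|B|\ge 2$, we get $t\notin B$, so every vertex of $B$ has a successor. This gives $P_\mathcal{Y}=B$ and makes $S_\mathcal{Y}$ the common successor set, so condition b) holds with $|P_\mathcal{Y}|\ge 2$. Hence $\mathcal{Y}$ is an FSP or a BHSP.

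The only non-mechanical point is Step 1. Without it, $s\in A$ would make $P$ empty and $\mathcal{X}$ could be empty or miss vertices of $A$. The argument via the uniqueness of the source, together with $|A|\ge 2$, settles this. Everything else is reading off the definitions.
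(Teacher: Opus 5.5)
Your proposal is correct and follows essentially the same route as the paper, which likewise just observes that the in-twin property of $A$ yields condition a) for the incoming edges and the out-twin property of $B$ yields condition b) for the outgoing edges. Your Step 1 and your explicit FSP/FHSP (resp.\ FSP/BHSP) case split are details the paper leaves implicit, and you handle them correctly. Step 1 rules out $s\in A$ and $t\in B$ via $|A|,|B|\ge 2$ and the uniqueness of source and sink, so that $S_\mathcal{X}=A$ and $P_\mathcal{Y}=B$.
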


\begin{proof}
    According to Def.~\ref{def:cluster}, $A$ is a set of in-twins, so for the set of their incoming edges condition a) (with $S=A$ and some vertex set $P \subseteq V \setminus V'$) of the syncpoint definition (Definition~\ref{def:syncpoint}) is fulfilled. Likewise, $B$ is a set of out-twins, so for the set of their outgoing edges condition b) (with $P=B$ and some vertex set $S \subseteq V \setminus V'$) of the syncpoint definition is fulfilled.
\end{proof}

\begin{definition}
    A sycnpoint is called {\em 1-to-1 syncpoint (11SP)} iff $|S|=1$ and $|P|=1$ in Definition~\ref{def:syncpoint}. A syncpoint is called a {\em Maximum Syncpoint (MSP)} if it is not a proper subset of another syncpoint. Otherwise it is called a {\em Subsyncpoint (SSP)}.
\end{definition}

\begin{lemma}
\label{lem:maxsp}
    A Fullsyncpoint, characterized by edge set $\mathcal{X}$, is always a maximal syncpoint.
\end{lemma}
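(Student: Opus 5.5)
The plan is to argue by contradiction. Suppose the FSP $\mathcal{X}$, with start set $P=P_\mathcal{X}$ and end set $S=S_\mathcal{X}$, is a proper subset of some syncpoint $\mathcal{Y}$ (an FSP, FHSP or BHSP), with start set $P'=P_\mathcal{Y}\supseteq P$ and end set $S'=S_\mathcal{Y}\supseteq S$. I will show that $\mathcal{Y}=\mathcal{X}$. The key idea is that the FSP conditions a) and b) make the pair $(P,S)$ closed under taking predecessors of $S$ and successors of $P$. Any syncpoint containing an edge of $\mathcal{X}$ must respect that closure.

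First I treat the case where $\mathcal{Y}$ satisfies condition a). Pick any $v\in S\subseteq S'$. By condition a) for $\mathcal{Y}$, $P'$ is the common set of predecessors of all vertices of $S'$, so $P'=\mathrm{pred}(v)$. By condition a) for $\mathcal{X}$, $\mathrm{pred}(v)=P$. Hence $P'=P$. Now pick any $u\in P$. Since $P=P'$, every vertex $w\in S'$ has $u$ as a predecessor, so $w\in\mathrm{succ}(u)$. By condition b) for $\mathcal{X}$, $\mathrm{succ}(u)=S$. Hence $S'\subseteq S$, and therefore $S'=S$.

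The case where $\mathcal{Y}$ satisfies condition b) is symmetric. Take $u\in P$. Condition b) for $\mathcal{Y}$ gives $S'=\mathrm{succ}(u)=S$. Then every $w\in P'$ is a predecessor of some (indeed every) vertex of $S=S'$, so by condition a) for $\mathcal{X}$ we get $P'\subseteq P$, hence $P'=P$. Since every syncpoint satisfies a) or b), these two cases cover all possibilities.

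In both cases $\mathcal{Y}$ has the same start and end sets as $\mathcal{X}$. Every edge of $\mathcal{Y}$ then leads from $P$ to $S$. By definition, $\mathcal{X}$ already contains all edges from $P$ to $S$, so $\mathcal{Y}\subseteq\mathcal{X}$. This contradicts properness, so $\mathcal{X}$ is maximal.

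The main subtlety I expect is bookkeeping in the definitions rather than any deep step. First, one must use that condition a) says $P$ equals the predecessor set of \emph{each} vertex of $S$, including the case $|S|=1$. Second, one must use that the definition of syncpoint forces $\mathcal{X}$ to contain all $P$-to-$S$ edges; this is what turns equality of the vertex sets into equality of the edge sets. I will state both explicitly at the start.
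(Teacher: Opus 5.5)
Your proof is correct, but it is organized differently from the paper's.

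The paper's proof is edge-based. It fixes one extra edge $\hat{e}=(\hat{u},\hat{v})\in\hat{\mathcal{X}}\setminus\mathcal{X}$ and splits on where its endpoints lie. Because $\mathcal{X}$ contains all $P$-to-$S$ edges, either $\hat{u}\notin P$ or $\hat{v}\notin S$. The two mixed cases are excluded directly by a) resp.\ b) for $\mathcal{X}$; they use only that $\hat{e}$ is an edge of $G$, nothing about $\hat{\mathcal{X}}$. Only the case with both endpoints outside $P\cup S$ invokes the twin conditions of the larger syncpoint, and it does so in a single sentence.

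You instead split on which of a) or b) the larger syncpoint $\mathcal{Y}$ satisfies. You first derive the vertex-set identities $P_\mathcal{Y}=P$ and $S_\mathcal{Y}=S$, and only then pass to edges via the ``all $P$-to-$S$ edges'' clause. This case split is exactly what spells out the paper's terse third case: condition a) for $\mathcal{Y}$ forces $\mathrm{pred}(\hat{v})=P$, and condition b) forces $\mathrm{succ}(\hat{u})=S$, each contradicting the edge $\hat{e}$.

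Your route also yields slightly more than the lemma. You only use a common end vertex in case a) and a common start vertex in case b). So your argument shows that an FSP contains \emph{every} syncpoint with which it shares an edge, not just those that contain it.

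The paper's version, for its part, is shorter and puts the contradiction on a single edge. Its first two cases need no property of the competing syncpoint at all.
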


\begin{proof}
    Using the notation of Definition~\ref{def:syncpoint}, $\mathcal{X}$ contains all edges from $P$ to $S$. Suppose there was a syncpoint $\hat{\mathcal{X}}$ with $\mathcal{X} \subsetneq \hat{\mathcal{X}}$. Let $\hat{e} = (\hat{u},\hat{v}) \in \hat{\mathcal{X}} \setminus \mathcal{X}$. Then $\hat{u} \notin P$ or  $\hat{v} \notin S$ (or both). Assume wlog that  $\hat{u} \notin P$ and $\hat{v} \in S$. This contradicts Definition~\ref{def:syncpoint} a) which states that $P$ is the set of predecessors of $S$. The dual case $\hat{u} \in P$ and $\hat{v} \notin S$ leads to a similar contradiction. The case $\hat{u} \notin P$ and $\hat{v} \notin S$ contradicts the twin-conditions of both a) and b) of Definition~\ref{def:syncpoint}. 
\end{proof}

\begin{lemma}
\label{lem:sp_front_end}
 Given an st-DAG $G=(V,E)$, the set of edges leaving $s \in V$ (the unique source vertex) is a FSP. Likewise, the set of edges entering $t \in V$ (the unique target vertex) is also a FSP.
\end{lemma}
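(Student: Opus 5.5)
The plan is to verify the conditions of Definition~\ref{def:syncpoint} directly, with $\mathcal{X}$ the set of edges leaving $s$, $P=\{s\}$ and $S$ the set of successors of $s$. The case of $t$ then follows by the symmetric argument, reversing all edges. First I note that $\mathcal{X}$ is nonempty. This needs $|V|\geq 2$; in the trivial single-vertex st-DAG there are no edges, so the statement is only meaningful for nontrivial DAGs. When $|V|\geq 2$, $s$ has at least one outgoing edge, since otherwise $s$ would be a second vertex without outgoing edges besides $t$. The requirement that $\mathcal{X}$ contains all edges from $P$ to $S$ holds trivially, because $\mathcal{X}$ consists of all edges leaving $s$.

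Condition b) is immediate: $|P|=1$, and $S$ is by construction exactly the set of successors of $s$.

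The substantive step is condition a). I must show that every $v\in S$ has $s$ as its \emph{only} predecessor; then the vertices of $S$ are in-twins (or $|S|=1$), and $P=\{s\}$ is their common predecessor set. For this I will use two facts.
\begin{itemize}
\item Every vertex of an st-DAG is reachable from $s$. Starting at any vertex and repeatedly stepping to some predecessor must terminate by acyclicity and finiteness, and it can only stop at a vertex without incoming edges, which is $s$.
\item The standing assumption from Section~\ref{sec:defs_and_props} that $G$ contains no redundant edges.
\end{itemize}
Suppose $v\in S$ had a predecessor $u\neq s$. Then there is a path from $s$ to $u$, and appending the edge $(u,v)$ gives a path from $s$ to $v$. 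This path does not contain the edge $(s,v)$: its last edge is $(u,v)$, and $(s,v)$ cannot occur earlier because the path would then revisit $v$, contradicting acyclicity. Hence $(s,v)$ is redundant, which is a contradiction. So $\mathrm{pred}(v)=\{s\}$ for all $v\in S$, and a) holds, making $\mathcal{X}$ an FSP.

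For $t$, the argument is dual: every vertex reaches $t$, and non-redundancy forces every predecessor of $t$ to have $t$ as its unique successor. This gives b) with out-twins, while a) holds trivially since $|S|=1$. The main obstacle is the redundancy argument for a). Without the no-redundant-edges assumption the statement fails, so the proof must invoke it explicitly together with the reachability property of st-DAGs.
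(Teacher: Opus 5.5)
Your proof is correct and follows the same route as the paper, which simply states that the lemma follows immediately from Definition~\ref{def:syncpoint}. Your version is more complete than the paper's: it makes explicit that condition a) for the edges leaving $s$ (and condition b) for the edges entering $t$) is not automatic, but relies on the standing no-redundant-edges assumption together with reachability from $s$ (resp.\ to $t$). It also correctly flags the degenerate single-vertex DAG, where the edge set is empty and the statement cannot hold.
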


\begin{proof}
    This follows immediately from the FSP definition (Definition~\ref{def:syncpoint}).
\end{proof}

\begin{figure}
    \centering
    \includegraphics[width=0.5\linewidth]{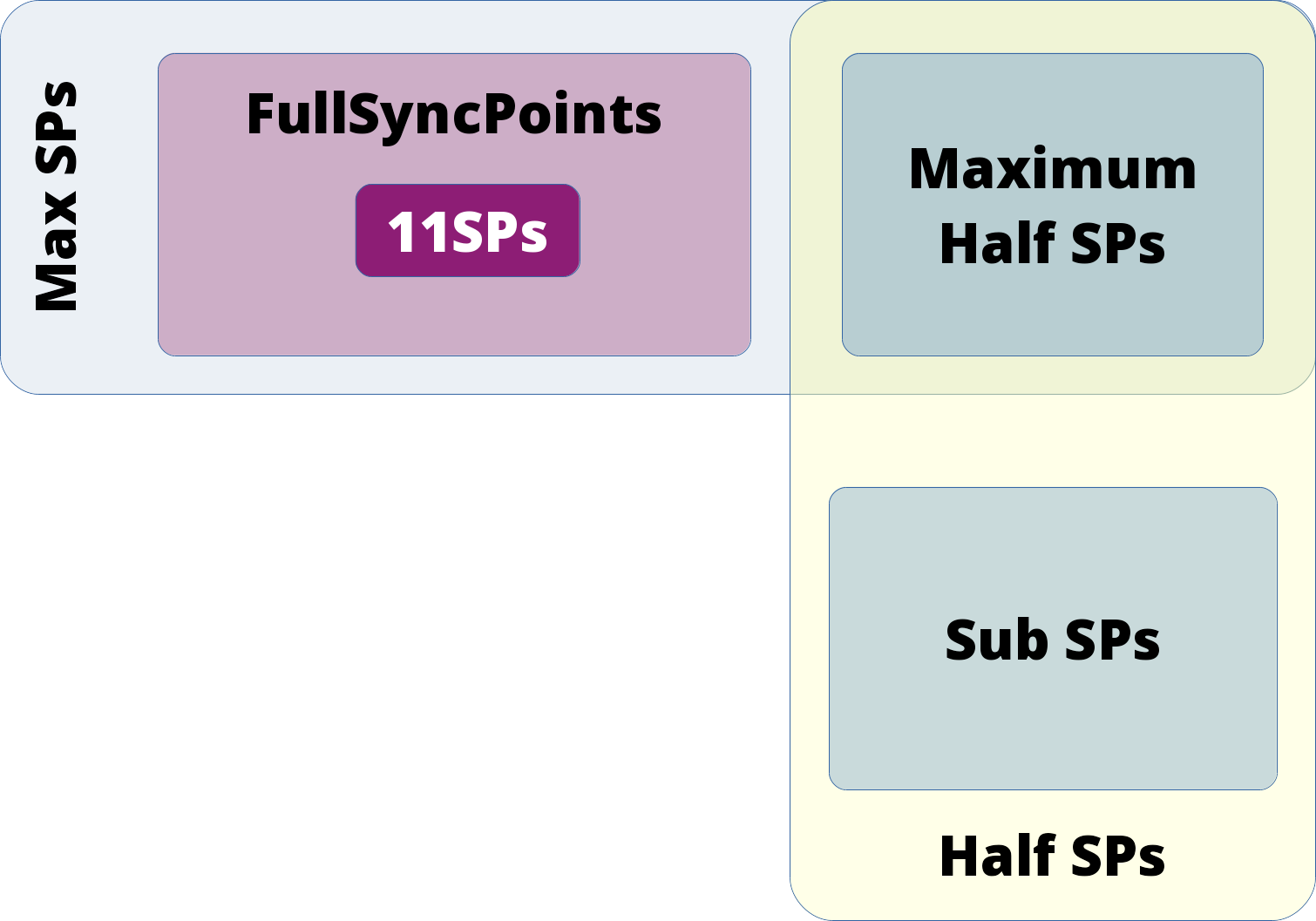}
    \caption{The relationships between different classes of syncpoints}
    \label{fig:SPsets}
\end{figure}

Figure~\ref{fig:SPsets} illustrates the relationships between the different classes of syncpoints.
Figure~\ref{fig:SPtypeExamples} shows a DAG with some examples of the different types of syncpoints.
Coloured vertical bars stretching over edge sets are used to illustrate syncpoints.
Notice that in this DAG there are two possibilities for parallel reduction (vertices 3 and 7, and vertices 4 and 6).
After performing these parallel reductions, the resulting two new vertices could be reduced serially, such that the subgraph induced by vertices $\{3,4,6,7 \}$ would be collapsed to a single vertex. However, at this point no further serial or parallel reductions would be possible.
\begin{figure}
    \centering
    \includegraphics[width=0.3\linewidth]{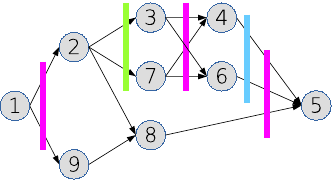}
    \caption{A DAG exemplifying different types of syncpoints. Fullsyncpoints (FSPs) are shown in magenta colour, Forward-Halfsyncpoints (FHSPs) in green and Backward-Halfsyncpoints (BHSPs) in blue. Note that the BHSP is a Subsyncpoint (SSP) of the rightmost FSP. All SPs except the BHSP are Maximum Syncpoints (MSPs).}
    \label{fig:SPtypeExamples}
\end{figure}

Algorithm~\ref{alg:findMSPs} shows an algorithm for finding all MSPs of a given st-DAG $G=(V,E)$.
It basically determines all out- resp.\ in-twin classes of $V$ and checks from each of them in forward resp.\ backward direction for possible syncpoints. Its runtime complexity can be assessed as follows:
Lines \ref{line:FindMSP1} and \ref{line:FindMSP2} can be performed in a single traversal of $G$.
If $|V| = n$ and $|E| = m < n^2$ this takes time $O(m)$.
Using suitable data structures, the for-loops in lines \ref{line:FindMSP3} and \ref{line:FindMSP9} can be performed in time $O(t)$, where $t < n$ is the number of (out- or in-)twin classes.
To see this, notice that all information needed in those two for-loops (such as the sets $S$ and $P$ in lines \ref{line:FindMSP4} and \ref{line:FindMSP10} and the sizes of all in-twin and out-twin classes) has already been precomputed in lines \ref{line:FindMSP1} and \ref{line:FindMSP2},
and that
the evaluation of the conditions in lines \ref{line:FindMSP5} and \ref{line:FindMSP11} can be realized by comparing precomputed (constant-size)
%constant-size by the method of rolling hashing
hash values of sets.
The for-loop in line \ref{line:FMSP17} can also be performed in time $O(n)$.
Therefore, the overall time complexity of Algorithm~\ref{alg:findMSPs} is $O(m)$.

\begin{algorithm}[ht]
    \caption{Algorithm $FindAllMSPs$ for finding all MSPs of an st-DAG $G=(V,E)$}\label{alg:findMSPs}
    \begin{algorithmic}[1]
        \State {\bf function} $FindAllMSPs$ ($G$: st-DAG) {\bf returns} void
        \State in a single pass of $G$, determine all out-twin classes and all in-twin classes of $V$\label{line:FindMSP1}
        \State during the same traversal of $G$, determine all vertices with out-degree one or with in-degree one \label{line:FindMSP2}
        \For{each out-twin class $P$ found} \label{line:FindMSP3}
        \State $S :=$ set of successor vertices of $P$
        \label{line:FindMSP4}
        \If{($S$ is an in-twin{} class or if $|S|==1$) $\wedge$ (in-degree of $v \in S$ is $|P|$)}
        \label{line:FindMSP5}
        \State output the set of edges leaving $P$ as a FSP
        \label{line:FSP}
        \Else
        \State output the set of edges leaving $P$ as a maximum BHSP
        \EndIf
        \EndFor
        \For{each in-twin class $S$ found} \label{line:FindMSP9}
        \State $P:=$ set of predecessor vertices of $S$
        \label{line:FindMSP10}
        \If{($P$ is an out-twin class) $\wedge$ (out-degree of $v \in P$ is $|S|$)}
        \label{line:FindMSP11}
        \State do nothing (this FSP was already found in line~\ref{line:FSP})
        \ElsIf{$|P|==1$ $\wedge$ (out-degree of $v \in P$ is $|S|$)}
        \State output the set of edges leading to $S$ as a FSP
        \Else
        \State output the set of edges leading to $S$ as a maximum FHSP
        \EndIf
        \EndFor
        \For{all vertices $u$ with out-degree one} \label{line:FMSP17}
        \If{$v :=$ (successor vertex of $u$) has in-degree one}
        \State output the edge $(u,v)$ as a 11SP
        \EndIf
        \EndFor    
    \end{algorithmic}
\end{algorithm}

\subsection{Precedence of Syncpoints}
\label{subsec:sp_prec}

For a given st-DAG $G$, we construct another DAG whose nodes are the maximum syncpoints of $G$ and whose edges are given by the natural precedence relation among syncpoints defined in Definition~\ref{def:sp_prec}.

\begin{definition}
\label{def:sp_prec}
Let $\mathcal{X} \neq \mathcal{Y}$ be two MSPs. We say that {\em $\mathcal{X}$ precedes $\mathcal{Y}$} (notation $\mathcal{X} \rightsquigarrow \mathcal{Y}$) iff either
\begin{itemize}
\item there is an st-path $(e_1, \dots, e_l)$ and indices $1 \leq i<j \leq l$ such that $e_i \in \mathcal{X}$ and $e_j \in \mathcal{Y}$, or
\item $\mathcal{X} \cap \mathcal{Y} \neq \emptyset$ and $\mathcal{X}$ is a BHSP and $\mathcal{Y}$ is a FHSP.
\end{itemize}
More specifically, we say that {\em $\mathcal{X}$ immediately precedes $\mathcal{Y}$} 
(notation $\mathcal{X} \overset{i}\rightsquigarrow \mathcal{Y}$)
iff $\mathcal{X} \rightsquigarrow \mathcal{Y}$ and either
\begin{itemize}
\item there is an st-path $(e_1, \dots, e_l)$ and indices $1 \leq i<j \leq l$ such that $e_i \in \mathcal{X}$ and $e_j \in \mathcal{Y}$
and $\forall i<k<j$ there does not exist any third MSP $\mathcal{Z}$ such that $e_k \in \mathcal{Z}$, or
\item $\mathcal{X} \cap \mathcal{Y} \neq \emptyset$ and $\mathcal{X}$ is a BHSP and $\mathcal{Y}$ is a FHSP (see second bullet above).
\end{itemize}
\end{definition}

Figure~\ref{fig:nondisjointSPs} illustrates the second bullet of Definition~\ref{def:sp_prec}, i.e.\ the special case of precedence of two non-disjoint MSPs $\mathcal{X}$ and $\mathcal{Y}$ where $\mathcal{X} \overset{i}\rightsquigarrow \mathcal{Y}$.
\begin{figure}
    \centering
    \includegraphics[width=0.2\linewidth]{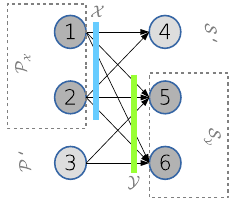}
    \caption{An example of two non-disjoint MSPs, where $\mathcal{X} \overset{i}\rightsquigarrow \mathcal{Y}$}
    \label{fig:nondisjointSPs}
\end{figure}

\begin{definition}
\label{def:msp_dag}
For a given st-DAG $G=(V,E)$, the {\em maximum syncpoint DAG (MSP-DAG) $G^M$} is defined as $G^M=(V^M,E^M)$ where $V^M$ ist the set of MSPs of $G$ and $E^M$ is given by the immediate precedence relation $\overset{i}\rightsquigarrow$ of Definition~\ref{def:sp_prec}.
\end{definition}

As opposed to the original st-DAG $G$, which according to Section~\ref{sec:defs_and_props} should not contain redundant edges, the maximum syncpoint DAG $G^M$ may contain redundant edges. These edges must not be deleted, since they are essential for the cluster finding algorithm in Section~\ref{sec:clustfind}.

\begin{lemma}
The MSP-DAG defined in Definition~\ref{def:msp_dag} has a single source MSP and a single target MSP (i.e.\ it is also an st-DAG). Both of them are FSPs.
\end{lemma}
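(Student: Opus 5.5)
The plan is to show that $\mathcal{X}_s$, the set of edges leaving $s$, is the unique source of $G^M$, and dually that $\mathcal{X}_t$, the set of edges entering $t$, is the unique target. By Lemma~\ref{lem:sp_front_end} both are FSPs, and by Lemma~\ref{lem:maxsp} both are MSPs, so they are vertices of $G^M$. I will treat the degenerate case separately. If $G$ has a single vertex, or $s$ and $t$ are joined by a single edge, then $\mathcal{X}_s=\mathcal{X}_t$ is the only MSP and the claim is trivial. From now on I assume $\mathcal{X}_s\neq\mathcal{X}_t$. Acyclicity of $G^M$ is taken as given by the definition of the MSP-DAG, since it is inherited from the order of edges along st-paths.

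\emph{Step 1: $\mathcal{X}_s$ has no incoming edge in $G^M$.} Suppose $\mathcal{Y}\overset{i}\rightsquigarrow\mathcal{X}_s$ for some MSP $\mathcal{Y}\neq\mathcal{X}_s$, and consider the two bullets of Definition~\ref{def:sp_prec}.
\begin{itemize}
\item In the first case, an st-path contains $e_i\in\mathcal{Y}$ before $e_j\in\mathcal{X}_s$ with $i<j$. But every edge of $\mathcal{X}_s$ starts at $s$. By acyclicity, $s$ occurs on an st-path only as the first vertex, so $j=1$, which is a contradiction.
\item In the second case, $\mathcal{X}_s$ would have to be an FHSP. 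This is impossible because it is an FSP, and the halfsyncpoint notions are defined by ``only a) holds'' or ``only b) holds''.
\end{itemize}
The dual argument shows that $\mathcal{X}_t$ has no outgoing edge.

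\emph{Step 2: every other MSP $\mathcal{Z}$ has an incoming edge and an outgoing edge.} Take any edge $e=(u,v)\in\mathcal{Z}$. Since $G$ is an st-DAG, every vertex lies on an st-path, so $e$ extends to an st-path $(e_1,\dots,e_l)$ with $e=e_j$. The first edge $e_1$ lies in $\mathcal{X}_s$.
\begin{itemize}
\item If $j>1$, then among the indices $k<j$ with $e_k$ in some MSP different from $\mathcal{Z}$, pick the largest one, and let $\mathcal{W}$ be that MSP. By the first bullet of Definition~\ref{def:sp_prec}, $\mathcal{W}\overset{i}\rightsquigarrow\mathcal{Z}$. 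This requires $\mathcal{W}\neq\mathcal{Z}$, and the choice of the largest such index guarantees that no third MSP lies strictly between them on the path.
\item If $j=1$, then $e\in\mathcal{X}_s\cap\mathcal{Z}$.
\end{itemize}
The case $j=1$ is the main obstacle. I must show that $\mathcal{Z}$, though intersecting $\mathcal{X}_s$, either equals $\mathcal{X}_s$ or contains another edge that is not incident to $s$. Here I will use the structure of syncpoints: if $\mathcal{Z}$ contains an edge $(s,v)$ and $s\in P_\mathcal{Z}$, then $P_\mathcal{Z}$ is $\{s\}$ or a set of out-twins of $s$. Since $s$ is the only vertex without predecessors, and out-twins of $s$ share its whole successor set, I expect to derive either $\mathcal{Z}\supseteq\mathcal{X}_s$, which forces $\mathcal{Z}=\mathcal{X}_s$ by Lemma~\ref{lem:maxsp}, or that $\mathcal{Z}$ is an FHSP with $S_\mathcal{Z}$ containing $v$ and $P_\mathcal{Z}$ containing a vertex other than $s$. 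In the latter case I would route an st-path through that other predecessor, so that the $\mathcal{Z}$-edge appears at a position greater than $1$, and then fall back to the case $j>1$.

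Dually, looking at successors along the path and using $\mathcal{X}_t$ in place of $\mathcal{X}_s$, every $\mathcal{Z}\neq\mathcal{X}_t$ has an outgoing edge. Combining Steps 1 and 2 shows that $\mathcal{X}_s$ is the unique source and $\mathcal{X}_t$ the unique target of $G^M$, and both are FSPs.
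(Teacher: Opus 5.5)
Your overall structure is sound, and it is more complete than the paper's proof. The paper only invokes Lemma~\ref{lem:sp_front_end} and asserts that the edges leaving $s$ form the source MSP; it never argues that no other MSP is a source. Step~1 is correct, and so is the case $j>1$ of Step~2: take the last edge before $e_j$ on the st-path that lies in an MSP other than $\mathcal{Z}$, and if it lies in several, pick any one as $\mathcal{W}$. The gap is the case $j=1$, which you leave as an expectation. Moreover, the dichotomy you anticipate is not the right one. It omits an FHSP $\mathcal{Z}$ with $P_\mathcal{Z}=\{s\}$ and $S_\mathcal{Z}\subsetneq Succ(s)$. Such a $\mathcal{Z}$ is a proper subset of $\mathcal{X}_s$, and it is excluded by the maximality of $\mathcal{Z}$ itself, not by Lemma~\ref{lem:maxsp}.

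The missing idea follows from the standing assumption that $G$ has no redundant edges: every $v\in Succ(s)$ satisfies $Pred(v)=\{s\}$. Indeed, any other predecessor $w$ of $v$ is reachable from $s$, since $s$ is the unique source. A path from $s$ to $w$ followed by the edge $(w,v)$ would then make $(s,v)$ redundant. This is also exactly what makes Lemma~\ref{lem:sp_front_end} true in the first place.

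With this fact the case closes at once. Let $\mathcal{Z}$ be an MSP containing $(s,v)$.
\begin{itemize}
\item If $\mathcal{Z}$ satisfies condition a), then $P_\mathcal{Z}=Pred(v)=\{s\}$, so $\mathcal{Z}\subseteq\mathcal{X}_s$, and maximality of $\mathcal{Z}$ forces $\mathcal{Z}=\mathcal{X}_s$.
\item If $\mathcal{Z}$ satisfies condition b), then $S_\mathcal{Z}=Succ(s)$. Since $\mathcal{Z}$ contains all edges from $P_\mathcal{Z}$ to $S_\mathcal{Z}$, we get $\mathcal{X}_s\subseteq\mathcal{Z}$, and Lemma~\ref{lem:maxsp} gives equality.
\end{itemize}
Hence no MSP other than $\mathcal{X}_s$ contains an edge leaving $s$, so the case $j=1$ is vacuous. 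Your envisaged ``FHSP with a second predecessor'' branch, and the rerouting through that predecessor, never arise. Dually, $Succ(u)=\{t\}$ for every $u\in Pred(t)$ handles the outgoing edges.

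One small correction: if $G$ is a single vertex, there are no edges, hence no syncpoints, and $G^M$ is empty. That case must be excluded rather than called trivial.
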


\begin{proof}
This follows from Lemma~\ref{lem:sp_front_end}.
Given an st-DAG $G=(V,E)$, the set $\mathcal{X}$ of edges leaving the source vertex $s$ is is a FSP, which is the source MSP of the MSP-DAG. A symmetric argument can be given for the set of edges in $G$ leading to the target vertex $t$.
\end{proof}

Figure~\ref{fig:DAGandMSPDAG1} shows a DAG and the associated MSP-DAG. The latter contains three redundant edges
(namely
$\mathcal{X} \overset{i}\rightsquigarrow \mathcal{Y}$,
$\mathcal{B} \overset{i}\rightsquigarrow \mathcal{E}$ and
$\mathcal{A} \overset{i}\rightsquigarrow \mathcal{H}$),
which will be essential for our cluster finding algorithm.
In particular, the (redundant) edge from FSP $\mathcal{A}$ to FSP $\mathcal{H}$ will enable the algorithm to find the minmal complex cluster induced by vertices $\{ 9, \dots, 20 \}$. 

\begin{figure}
    \begin{minipage}[c]{0.5\linewidth}
    \centering
    \includegraphics[width=0.9\linewidth]{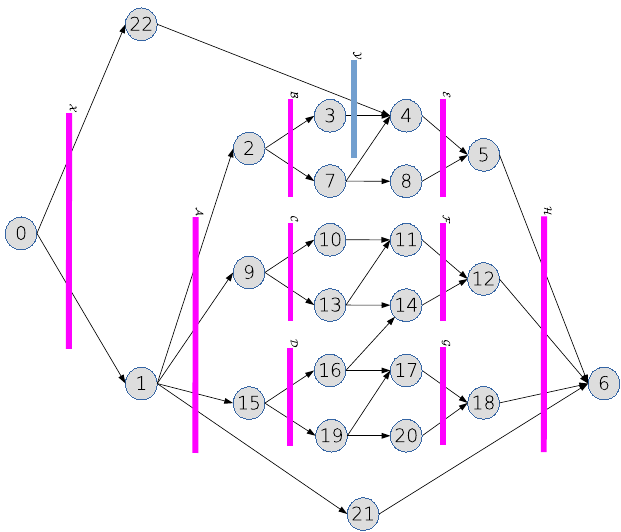}
    \end{minipage}
    \begin{minipage}[c]{0.5\linewidth}
    \centering
    \includegraphics[width=0.9\linewidth]{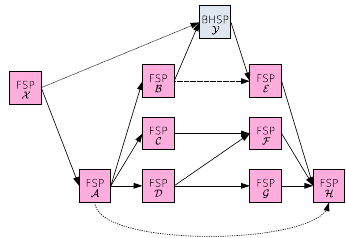}
    \end{minipage}
    \caption{A DAG (left) and the associated MSP-DAG (right)}
    \label{fig:DAGandMSPDAG1}
\end{figure}

Another example of a DAG and its associated MSP-DAG is shown in Fig.~\ref{fig:DAGandMSPDAG2}. Again, the MSP-DAG contains redundant edges
(namely
$\mathcal{A} \overset{i}\rightsquigarrow \mathcal{E}$,
$\mathcal{C} \overset{i}\rightsquigarrow \mathcal{E}$ and
$\mathcal{D} \overset{i}\rightsquigarrow \mathcal{F}$).
However, in this case these redundant edges do not connect any pairs of MSPs which would give rise to a minimal cluster (notice that the subgraph induced by vertex set $\{ 2,3,4,5,8,9,10,11\}$, enclosed by SSPs of MSPs $\mathcal{A}$ and $\mathcal{E}$, is a cluster, but not a minimal cluster).
In fact, the only minimal cluster in this DAG is induced by the vertex set $\{4,5,8,9 \}$, whose brackets are MSP $\mathcal{B}$ and a SSP of MSP $\mathcal{E}$.

\begin{figure}[h]
    \begin{minipage}[c]{0.5\linewidth}
    \centering
    \includegraphics[width=0.9\linewidth]{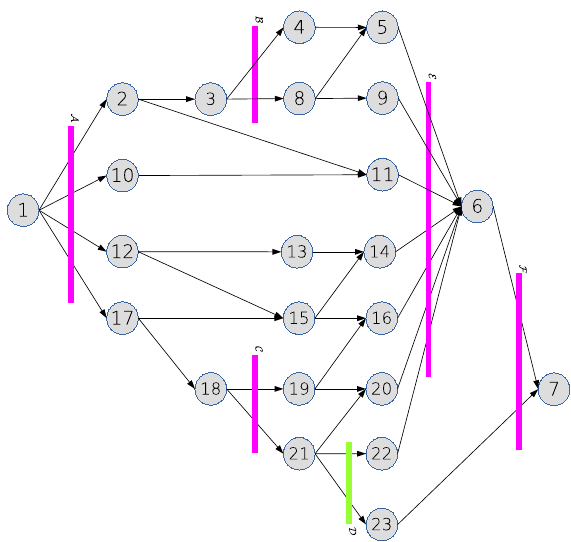}
    \end{minipage}
    \begin{minipage}[c]{0.5\linewidth}
    \centering
    \includegraphics[width=0.9\linewidth]{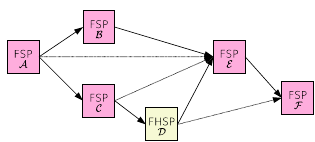}
    \end{minipage}
    \caption{Another DAG (left) and the associated MSP-DAG (right)}
    \label{fig:DAGandMSPDAG2}
\end{figure}

\section{Minimal Cluster Finding Algorithm}
\label{sec:clustfind}

\subsection{Key concepts and overview of the algorithm}
\label{subsec:key_conc}

A cluster is always framed by two syncpoints (see Lemma~\ref{lem:clust_SP}). More precisely, it is framed by an opening FSP/FHSP $\mathcal{X}$ and a closing FSP/BHSP $\mathcal{Y}$ (where $\mathcal{X}$ and $\mathcal{Y}$ do not have to be maximum syncpoints, they may also be subsyncpoints).
Therefore, our algorithm $FindAllMinClusters$ systematically checks candidates of possible cluster-enclosing MSP pairs $(\mathcal{X},\mathcal{Y})$.
These candidate bracket pairs are identified
by inspecting the MSP-DAG.
Since we are only interested in {\em minimal} clusters, the algorithm processes candidate pairs in order of increasing distance between $\mathcal{X}$ and $\mathcal{Y}$. In order to find all relevant MSP pairs, the algorithm -- in a preprocessing step -- constructs all possible relevant MSP-paths from some FSP/FHSP $\mathcal{X}$ to a FSP/BHSP $\mathcal{Y}$ (in the MSP-DAG).
These MSP-paths are stored in lists $PathsTo[\mathcal{Y}]$ (one list per MSP $\mathcal{Y}$), each one sorted in sections by increasing path length. For details on how to construct all of these relevant MSP-paths see \ref{sec:computingpaths}.

Algorithm~\ref{alg:findallminclust} shows the top-level algorithm for finding all minimal clusters in an st-DAG. It takes two arguments: The st-DAG $G$ and its associated MSP-DAG $G^M$.
It calls the function $ClusterCheck$ shown in Algorithm~\ref{alg:clustercheck}, which takes as arguments (apart from $G$) a pair of MSPs $\mathcal{X}$ and $\mathcal{Y}$ to be checked for in-between clusters.
$ClusterCheck$, in turn, calls the function $ClosedCheck$, shown in Algorithm~\ref{alg:closedcheck} whose job is more specific:
For two candidate vertex sets $X\!set$ and  $Y\!set$ (where $X\!set$ is derived from $\mathcal{X}$,
and $Y\!set$ is derived from $\mathcal{Y}$), it checks whether they are indeed the sets of entry resp.\ exit vertices of a cluster.

\begin{algorithm}[ht]
    \caption{$FindAllMinClusters$ finds all minimal clusters in an st-DAG $G=(V,E)$ whose  MSP-DAG ist given by $G^M=(V^M,E^M)$}\label{alg:findallminclust}
    \begin{algorithmic}[1]
        \State {\bf function} $FindAllMinClusters$ ($G$: st-DAG, $G^M$: MSP-DAG) {\bf returns} void
        \For {all FSPs and BHSPs $\mathcal{Y} \in V^M$}
        \State construct the list $PathsTo[\mathcal{Y}]$ of all MSP-paths in $G^M$ of any length starting at any FSP or FHSP $\mathcal{X} \in V^M$ and ending in $\mathcal{Y}$
        \EndFor
        \State /// the lists $PathsTo[\mathcal{Y}]$ contain all relevant paths to be checked for clusters
        \State /// each list $PathsTo[\mathcal{Y}]$ is already sorted (in sections) by construction
        \State /// for details on how to construct these lists efficiently, see \ref{sec:computingpaths}
        \For{all MSP-paths $p$ in any list $PathsTo[\mathcal{Y}]$ in order of incr.\ length}
        \label{line:FAMC7}
        \State mark $p$ as checked
        \State /// each MSP-path $p$ should only be checked once
        \State $\mathcal{X} := StartMSP(p)$
        \State $\mathcal{Y} := EndMSP(p)$
        \If{$|S_\mathcal{X}| \geq 2 \wedge |P_\mathcal{Y}| \geq 2 \wedge (S_\mathcal{X} \cap P_\mathcal{Y} = \emptyset)$}
        \State /// make sure that MSPs $\mathcal{X}$ and $\mathcal{Y}$ are big enough
        \State $found := ClusterCheck(G, \mathcal{X},\mathcal{Y})$
        \EndIf
        \If{$found$}
        \State delete $p$ and all of its superpaths
        \label{line:FAMC16}
        \State /// these superpaths can be found in the same list $PathsTo[\mathcal{Y}]$ and
        \State /// in all other lists $PathsTo[\mathcal{Z}]$ where MSP $\mathcal{Z}$ is reachable from $\mathcal{Y}$
        \State /// using the described pointer/separator structure
        \EndIf
        \EndFor
    \end{algorithmic}
\end{algorithm}

\subsection{Explanation of the algorithm}

Algorithm $FindAllMinClusters$ first performs the mentioned preprocessing, i.e.\ the construction of all lists $PathsTo[.]$.
Then, in the for-loop of line~\ref{line:FAMC7}, it goes through all relevant MSP-path candidates (in some order of increasing path length) whose start- and end-MSP (denoted $\mathcal{X}$ and $\mathcal{Y}$ in the algorithm) might enclose one or more clusters.
MSP-paths which have been investigated are marked as $checked$, and in case a cluster has been found while looking at path $p$, that $p$ and all of its superpaths are deleted (since we are only interested in minimal clusters, superpaths of successful paths do not have to be investigated).

The checking of a particular MSP pair $(\mathcal{X},\mathcal{Y})$ for in-between clusters is done by function $ClusterCheck$ (see Algorithm~\ref{alg:clustercheck}). Notice that there may be even more than one cluster enclosed by $\mathcal{X}$ and $\mathcal{Y}$. Correspondingly, we must realize that, in general, not all vertices of $S_{\mathcal{X}}$ have to belong to the cluster(s) (recall from Definition~\ref{def:syncpoint} that for an MSP $\mathcal{X}$, we denote by $P_\mathcal{X}$ the set of its start/predecessor vertices and by  $S_\mathcal{X}$ the set of its end/successor vertices). Only a subset of $S_{\mathcal{X}}$ may belong to some cluster (as its entry nodes), and a symmetrical statement can be made about the set $P_{\mathcal{Y}}$ of the closing MSP $\mathcal{Y}$.
Therefore, the function $ClusterCheck$ matches subsets $X\!set$ of $S_{\mathcal{X}}$ with subsets $Y\!set$ of $P_{\mathcal{Y}}$ which are mutually reachable in forward/backward direction.
The subgraph between $X\!set$ and $Y\!set$ is a candidate for a cluster, but it must be further checked by the function $ClosedCheck$ (see Algorithm~\ref{alg:closedcheck}).
Function $ClosedCheck$ builds the intersection of the forward reachability set of $X\!set$ and the backward reachability set of $Y\!set$. The induced subgraph resulting from this intersection is then checked for closedness, i.e.\ it must not have any entry vertices apart from $X\!set$ and not any exit vertices apart from $Y\!set$. If these conditions are satisfied, then the subgraph between $X\!set$ and $Y\!set$ is indeed a cluster.

\begin{algorithm}[ht]
\caption{$ClusterCheck$ checks for MSPs $\mathcal{X}$ and $\mathcal{Y}$ (where $\mathcal{X} \rightsquigarrow \mathcal{Y}$) whether there are pairs of a subset of $S_\mathcal{X}$ and a subset of $P_\mathcal{Y}$ such that the graph between them is a minimal cluster}\label{alg:clustercheck}  
\begin{algorithmic}[1]
        \State {\bf function} $ClusterCheck$ ($G$: st-DAG, $\mathcal{X},\mathcal{Y}$: MSP) {\bf returns} boolean
        \State $Covered := \emptyset$
        \State /// $Covered$ will be the subset of $S_\mathcal{X}$ including all vertices which have \State /// been processed in the following forward-backward reachability scheme
        \State $foundclusters := false$
        \While{$Covered \neq S_\mathcal{X}$}
        \State /// repeat this loop as long as there are vertices in $S_\mathcal{X}$ that have \State /// not been covered/tested
        \State $x :=$ some vertex of $S_\mathcal{X}$ that is not yet in set $Covered$
        \State $X\!set := \{x\}$
        \State $Y\!set := \emptyset$
        \State $stable := false$
        \While{not $stable$}
        \State /// this loop realizes a forward-backward reachability scheme
        \State $Y\!setnew :=$ all vertices of $P_\mathcal{Y}$ forward reachable from $X\!set$
        \If{$Y\!setnew \neq \emptyset$}
        \State /// $Y\!setnew == \emptyset$ would mean that MSP $\mathcal{Y}$ is not reachable
        \State /// from vertex $x$, in which case we can stop the
        \State /// forward-backward scheme
        \State $X\!setnew :=$ all vertices of $S_\mathcal{X}$ backward reachable from $Y\!setnew$
        
\Else
         \State $X\!setnew := X\!set$
\EndIf
\State $stable := (X\!setnew == X\!set \wedge Y\!setnew == Y\!set)$
\State $X\!set := X\!setnew$
\State $Y\!set := Y\!setnew$
\EndWhile
\State $Covered := Covered \cup X\!set$
\If{$|X\!set| \geq 2 \wedge |Y\!set| \geq 2$}
\If{$ClosedCheck(G, X\!set,Y\!set)$}
\State output subgraph between $X\!set$ and $Y\!set$ as a minimal cluster
\State $foundclusters := true$
\EndIf
\EndIf
\EndWhile
\State return $foundclusters$
\end{algorithmic}
\end{algorithm}

\begin{algorithm}[ht]    \caption{$ClosedCheck$ checks whether the subgraph between $X\!set$ and $Y\!set$ is closed, i.e. can be entered only via $X\!set$ and exited only via $Y\!set$}\label{alg:closedcheck}  
\begin{algorithmic}[1]
\State {\bf function} $ClosedCheck$ ($G$: st-DAG, $X\!set,Y\!set$: vertexset) {\bf returns} boolean
\State $VertexSet := FwReach(X\!set) \cap BwReach(Y\!set)$
\If{$VertexSet == \emptyset$}
\State return $false$
\EndIf
\State /// this if-clause is actually redundant, because the calling function
\State /// already ensured that $Y\!set$ is reachable from $X\!set$
\State $NonEntries := VertexSet \setminus X\!set$
\If{$\exists v \in NonEntries$ which has predecessor vertex not in $VertexSet$}
\label{line:ClosedCheck8}
\State return $false$
\EndIf
\State $NonExits := Vertices \setminus Y\!set$
\If{$\exists v \in NonExits$ which has successor vertex not in $VertexSet$}
\label{line:ClosedCheck11}
\State return $false$
\EndIf
\State return $true$
\end{algorithmic}
\end{algorithm}

\subsection{Deletion of MSP-superpaths}

As already mentioned, once a successful MSP-path $p$ has been found, all superpaths of $p$ can safely be deleted/ignored from all lists $PathsTo[.]$, since we are only interested in {\em minimal} clusters. This is done in line~\ref{line:FAMC16} of Algorithm~\ref{alg:findallminclust}.
A superpath of $p$ is an extension of $p$ at the front or/and at the tail. For example, the MSP-path $[\mathcal{A},\mathcal{B},\mathcal{C},\mathcal{D}]$ is a superpath of $[\mathcal{A},\mathcal{B},\mathcal{C}]$, but it is not a superpath of $[\mathcal{A},\mathcal{C},\mathcal{D}]$ (here we see that is important to keep all redundant edges in the MSP-DAG).
All superpaths of the current MSP-path $p$ can be found efficiently by setting up a pointer structure between the entries of the lists $PathsTo{[.]}$, and by adding sublist separators of the form $|_{\scriptscriptstyle n}$ within each of those lists, which all can be done during Algorithm~\ref{alg:computeallvpaths} at practically no additional cost.
More precisely, when Algorithm~\ref{alg:computeallvpaths} (in line~\ref{line:cavp11}) adds the new path $p_{new}$ to the list $PathsTo[v]$, a pointer from $p$ in $PathsTo[u]$ to $p_{new}$ in $PathsTo[v]$ is created.
Furthermore, while processing the various successors $v$ during the for-loop of line~\ref{line:cavp8}, the algorithm 
inserts a new sublist separator $|_{\scriptscriptstyle 1}$ at the end of the current list $PathsTo[v]$ before moving to the next successor $v$.
In addition, existing sublist separators are inherited from the list $PathsTo[u]$ while going through that list in the for-loop of line~\ref{line:cavp9}, but their index is increased by one.
Consequently, the index $n$ of a separator $|_{\scriptscriptstyle n}$ indicates how much the suffixes of paths coincide across the separator.
In other words, the index $n$ of a separator $|_{\scriptscriptstyle n}$ indicates up to which length of an originating path the separator should be ignored.
Therefore, starting with a path whose vertex length (denoted $vlen$) is $l$, and moving along within the same list, all separators $|_{\scriptscriptstyle n}$ with $l \leq n$ are ignored.
With the help of these pointers and sublist separators, superpaths arising from additional prefixes as well as postfixes can be found easily by moving along the same list up to the next (valid) separator and by following the pointers into the other lists.
The described pointer/separator structure thus allows for the efficient deletion of all superpaths of a successful path.
Notice that the creation of the pointers/separators is not part of the code in Algorithm~\ref{alg:computeallvpaths}, in order to keep the algorithm clear and simple.

For an example of the pointer/separator structure just described see Figure~\ref{fig:PathsTo}, which corresponds to the MSP-DAG of Figure~\ref{fig:DAGandMSPDAG2}.
The figure shows all lists $PathsTo[.]$, one list per line, as resulting from Algorithm~\ref{alg:computeallvpaths}.
The figure shows the pointer structure only for two example paths: For the path
$[\mathcal{B}\mathcal{E}]$ (in blue) and for the path
$[\mathcal{D}\mathcal{E}]$ (in red).
Within each list, the sublist separators are shown by the $|_{\scriptscriptstyle n}$ symbol.
Notice that, starting with the path $[\mathcal{E} \mathcal{F}]$ whose vertex length is
$vlen([\mathcal{E} \mathcal{F}])=2$
and while moving along within the same list, all separators $|_{\scriptscriptstyle 2}$ are considered `invalid' and therefore ignored.
Notice that for this example from Figure~\ref{fig:DAGandMSPDAG2}, there is only one MSP pair actually leading to a minimal cluster, which is the pair $(\mathcal{B},\mathcal{E})$ corresponding to the `successful' path $[\mathcal{B}\mathcal{E}]$.
That means that only the blue pointers in Figure~\ref{fig:PathsTo}
would actually be followed while deleting superpaths of a successful path.

\begin{figure}
\centering
\scalebox{0.65}{
\begin{tikzpicture}
\node (rootA) at (-13.5,-0.6) {$PathsTo[\mathcal{A}]:$};
\node (A) at (-12,-0.6) {$[[\mathcal{A}]]$};
\node (rootB) at (-13.5,-1.4) {$PathsTo[\mathcal{B}]:$};
\node (B) at (-12,-1.4) {$[[\mathcal{B}]$};\node [right = -0.1cm of B] (AB) {$[\mathcal{AB}]]$};
\node (rootC) at (-13.5,-2.2) {$PathsTo[\mathcal{C}]:$};
\node (C) at (-12,-2.2) {$[[\mathcal{C}]$};\node [right = -0.1cm of C] (AC) {$[\mathcal{AC}]]$};
\node (rootD) at (-13.5,-3) {$PathsTo[\mathcal{D}]:$};
\node (D) at (-12,-3) {$[[\mathcal{D}]$};\node [right = -0.1cm of D] (CD) {$[\mathcal{CD}]$};
\node [right = -0.1cm of CD] (ACD) {$[\mathcal{ACD}]]$};
\node (rootE) at (-13.5,-3.8) {$PathsTo[\mathcal{E}]:$};
\node (E) at (-12,-3.8) {$[[\mathcal{E}]$};\node [right = -0.1cm of E] (AE) {$[\mathcal{AE}]$};
\node [right = -0.1cm of AE] (sep1) {$\Big|_1$};
\node [right = -0.2cm of sep1] (BE) {$[\mathcal{BE}]$};
\node [right = -0.1cm of BE] (ABE) {$[\mathcal{ABE}]$};
\node [right = -0.1cm of ABE] (sep2) {$\Big|_1$};
\node [right = -0.2cm of sep2] (CE) {$[\mathcal{CE}]$};
\node [right = -0.1cm of CE] (ACE) {$[\mathcal{ACE}]$};
\node [right = -0.1cm of ACE] (sep3) {$\Big|_1$};
\node [right = -0.2cm of sep3] (DE) {$[\mathcal{DE}]$};
\node [right = -0.1cm of DE] (CDE) {$[\mathcal{CDE}]$};
\node [right = -0.1cm of CDE] (ACDE) {$[\mathcal{ACDE}]]$};
\node (rootF) at (-13.5,-4.8) {$PathsTo[\mathcal{F}]:$};
\node (F) at (-12,-4.8) {$[[\mathcal{F}]$};\node [right = -0.1cm of F] (DF) {$[\mathcal{DF}]$};
\node [right = -0.1cm of DF] (CDF) {$[\mathcal{CDF}]$};
\node [right = -0.1cm of CDF] (ACDF) {$[\mathcal{ACDF}]$};
\node [right = -0.1cm of ACDF] (sep4) {$\Big|_1$};
\node [right = -0.2cm of sep4] (EF) {$[\mathcal{EF}]$};
\node [right = -0.1cm of EF] (AEF) {$[\mathcal{AEF}]$};
\node [right = -0.1cm of AEF] (sep5) {$\Big|_2$};
\node [right = -0.2cm of sep5] (BEF) {$[\mathcal{BEF}]$};
\node [right = -0.1cm of BEF] (ABEF) {$[\mathcal{ABEF}]$};
\node [right = -0.1cm of ABEF] (sep6) {$\Big|_2$};
\node [right = -0.2cm of sep6] (CEF) {$[\mathcal{CEF}]$};
\node [right = -0.1cm of CEF] (ACEF) {$[\mathcal{ACEF}]$};
\node [right = -0.1cm of ACEF] (sep7) {$\Big|_2$};
\node [right = -0.2cm of sep7] (DEF) {$[\mathcal{DEF}]$};
\node [right = -0.1cm of DEF] (CDEF) {$[\mathcal{CDEF}]$};
\node [right = -0.1cm of CDEF] (ACDEF) {$[\mathcal{ACDEF}]]$};
\draw [dashed, blue,->,out=80,in=100,looseness=0.8] ([xshift=0mm] BE.north) to ([xshift=0mm] ABE.north);
\draw [blue,->,out=-80,in=150,looseness=0.3] ([yshift=0mm] BE.south) to ([xshift=-1mm] BEF.north);
\draw [dashed, blue,->,out=80,in=100,looseness=0.4] ([xshift=1mm] BEF.north) to ([xshift=0mm] ABEF.north);
\draw [dashed, green,->,out=-80,in=-100,looseness=0.6] ([xshift=0mm] EF.south) to ([xshift=-1mm] AEF.south);
\draw [dashed, green,->,out=-80,in=-100,looseness=0.6] ([xshift=1mm] AEF.south) to ([xshift=-1mm] BEF.south);
\draw [dashed, green,->,out=-80,in=-100,looseness=0.6] ([xshift=1mm] BEF.south) to ([xshift=-1mm] ABEF.south);
\draw [dashed, green,->,out=-80,in=-100,looseness=0.6] ([xshift=1mm] ABEF.south) to ([xshift=-1mm] CEF.south);
\draw [dashed, green,->,out=-80,in=-100,looseness=0.6] ([xshift=1mm] CEF.south) to ([xshift=-1mm] ACEF.south);
\draw [dashed, green,->,out=-80,in=-100,looseness=0.6] ([xshift=1mm] ACEF.south) to ([xshift=-1mm] DEF.south);
\draw [dashed, green,->,out=-80,in=-100,looseness=0.6] ([xshift=1mm] DEF.south) to ([xshift=-1mm] CDEF.south);
\draw [dashed, green,->,out=-80,in=-100,looseness=0.6] ([xshift=1mm] CDEF.south) to ([xshift=0mm] ACDEF.south);
\draw [dashed, red,->,out=80,in=100,looseness=1] ([xshift=0mm] DE.north) to ([xshift=-1mm] CDE.north);
\draw [dashed, red,->,out=80,in=100,looseness=0.8] ([xshift=1mm] CDE.north) to ([xshift=0mm] ACDE.north);
\draw [red,->,out=-50,in=100,looseness=0.3] ([yshift=0mm] DE.south) to ([xshift=-1mm] 
DEF.north);
\draw [dashed, red,->,out=80,in=100,looseness=1] ([xshift=1mm] DEF.north) to ([xshift=-1mm] CDEF.north);
\draw [dashed, red,->,out=80,in=100,looseness=0.7] ([xshift=1mm] CDEF.north) to ([xshift=0mm] ACDEF.north);
\end{tikzpicture}
}
\caption{The lists $PathsTo[.]$ corresponding to the MSP-DAG of Figure~\ref{fig:DAGandMSPDAG2}; 
the sublist separators are shown as the $|_{\scriptscriptstyle n}$ symbol;
only part of the pointer structure is shown, namely for path $[\mathcal{BE}]$ (in blue), for path $[\mathcal{DE}]$ (in red) and for path $[\mathcal{EF}]$ (in green);
notice that separators $|_{\scriptscriptstyle 2}$ are ignored while moving along the list from $[\mathcal{EF}]$ because $vlen([\mathcal{EF}]) \leq 2$ (where $vlen$ counts the number of vertices of a path);
the dashed pointers within the same list are only shown for illustrative purposes, they are not actually implemented because the sublist separators suffice to indicate where the sublist of superpaths ends.}
\label{fig:PathsTo}
\end{figure}

\subsection{Complexity analysis}
\label{subsec:complexity}

In this section, we analyze the runtime complexity of the presented minimal cluster finding algorithm.

\begin{theorem}
\label{thm:complexity}
Given an st-DAG $G=(V,E)$ and the associated MSP-DAG $G^M=(V^M,E^M)$, with $n = |V|$ and 
$N=|V^M|$.
Let the maximum number of start/end vertices of any maximum syncpoint be
$k := \max \{ \max_{\mathcal{X} \in V^M} |S_{\mathcal{X}}|, \max_{\mathcal{Y} \in V^M} |P_\mathcal{Y}| \}$.
The overall runtime complexity of Algorithm~\ref{alg:findallminclust} for finding all minimal clusters in $G$ is
\begin{equation*}
O(k\cdot n^2 \cdot 2^N)
\end{equation*}
\end{theorem}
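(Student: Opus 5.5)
The bound comes from multiplying the number of MSP-paths by the cost of one call to $ClusterCheck$, plus the preprocessing cost. I would therefore split Algorithm~\ref{alg:findallminclust} into three parts and bound each one:
(i) construction of the lists $PathsTo[\cdot]$, including the pointer/separator structure;
(ii) the main for-loop of line~\ref{line:FAMC7}, i.e.\ the number of MSP-paths that are examined;
(iii) the cost of a single call $ClusterCheck(G,\mathcal{X},\mathcal{Y})$, including its calls to $ClosedCheck$.
Superpath deletion in line~\ref{line:FAMC16} only removes list entries. Each entry is deleted at most once, and the separators and pointers let us reach it in constant time. Its total cost is therefore charged to the number of paths and is absorbed in (i)/(ii).

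\textbf{Counting MSP-paths.} Every MSP-path in $G^M$ is determined by its vertex set, because $G^M$ is acyclic: a set of MSPs, ordered topologically, gives at most one path. Hence the total number of MSP-paths is at most $2^N$. This holds even though $G^M$ keeps its redundant edges. Algorithm~\ref{alg:computeallvpaths} creates each list entry once, by extending a path $p \in PathsTo[u]$ along an edge $(u,v)$. Creating the entry, its pointer and the inherited separators costs $O(1)$ per entry, up to copying the path. The path copy could be avoided by storing a predecessor pointer, or simply bounded by $O(N)$, which is dominated by the later factor $n^2$ since $N \le |E| \le n^2$. So (i) costs $O(2^N)$ times a term dominated by $k n^2$, and (ii) performs at most $2^N$ iterations.

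\textbf{Cost of one $ClusterCheck$.} This is the step I expect to be the main obstacle, because one has to bound the number of iterations of the nested while-loops. The outer loop adds at least the new vertex $x$ to $Covered \subseteq S_\mathcal{X}$ each time, so it runs at most $|S_\mathcal{X}| \le k$ times.

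For the inner forward--backward loop I would use monotonicity. After the first step $X\!set$ contains $x$. Each round sets $Y\!setnew$ to the vertices of $P_\mathcal{Y}$ reachable from $X\!set$, and $X\!setnew$ to the vertices of $S_\mathcal{X}$ that reach $Y\!setnew$. Thus $X\!set$ and $Y\!set$ only grow, and the loop stabilises after $O(k)$ rounds. Naively, each round is one forward and one backward search costing $O(n+m)=O(n^2)$, which would give $O(k^2 n^2)$ per call, one factor $k$ too many.

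To remove that factor I would argue incrementally. Forward search only needs to be started from newly added vertices of $X\!set$, and backward search only from newly added vertices of $Y\!set$. Keeping visited marks across rounds, the whole inner loop then visits every vertex and edge at most once in each direction, so it costs $O(n^2)$. Moreover, since the outer iterations produce disjoint $X\!set$'s, I would try to show they correspond to disjoint reachability components. If that works, even the outer loop amortises. If it does not, I fall back on: at most $k$ outer iterations, each costing $O(n^2)$.

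The call $ClosedCheck$ computes $FwReach \cap BwReach$ and then inspects the predecessors and successors of each vertex in the resulting set. It costs $O(n+m)=O(n^2)$ and is invoked at most once per outer iteration. Altogether one $ClusterCheck$ costs $O(k n^2)$, and the precondition test in Algorithm~\ref{alg:findallminclust} costs $O(k)$.

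\textbf{Combining.} Multiplying the at most $2^N$ examined paths by $O(k n^2)$ per path, and adding the preprocessing cost $O(2^N \cdot N)$ together with the $O(m)$ for Algorithm~\ref{alg:findMSPs} and for building $G^M$, gives the overall bound $O(k\cdot n^2\cdot 2^N)$.
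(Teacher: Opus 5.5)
Your argument is correct. Its top-level accounting is the paper's: $O(N\cdot 2^N)$ to build the lists $PathsTo[\cdot]$, plus at most $2^N$ examined paths times the cost of one $ClusterCheck$. Superpath deletions are amortised because every entry is deleted once.

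The real difference is how you bound one call of $ClusterCheck$:
\begin{itemize}
\item The paper precomputes $FwReach(v)$, $BwReach(v)$, $Pred(v)$ and $Succ(v)$ for all $v$, asserting this takes $O(n^2)$. The whole forward--backward scheme of a call then reduces to at most $2k-1$ intersections with $k$-element sets, i.e.\ $O(k^2)$. The factor $kn^2$ comes only from at most $k$ calls of $ClosedCheck$ at $O(n^2)$ each.
\item You do no global precomputation. Instead, resuming searches with persistent visited marks makes each run of the inner loop cost $O(n+m)$.
\end{itemize}
This makes your proof more self-contained. The paper's $O(n^2)$ claim for all reachability sets is its most delicate step: a reverse-topological sweep must merge $FwReach(w)$ into $FwReach(v)$ for every edge $(v,w)$, which costs $O(nm)$ in general. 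You only need linear-time searches. The paper's tables make the forward--backward scheme nearly free, but since $ClosedCheck$ dominates in its analysis anyway, this gains nothing asymptotically. Your count of MSP-paths (in a DAG a path is determined by its vertex set) is also more direct than the paper's appeal to the complete DAG.

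Two side remarks. First, the amortisation over outer iterations that you leave open does go through with one extra step. Confine every forward search to $BwReach(P_\mathcal{Y})$ and every backward search to $FwReach(S_\mathcal{X})$, both computed once per call.
\begin{itemize}
\item Suppose one vertex were visited by the forward searches of two components. It lies in $BwReach(P_\mathcal{Y})$, so it leads to a vertex of $P_\mathcal{Y}$ reachable from both $X\!set$'s. The two components would then coincide.
\item Symmetrically, a vertex shared by two backward searches is reached from a vertex of $S_\mathcal{X}$ that reaches both $Y\!set$'s.
\end{itemize}
Each $VertexSet$ in $ClosedCheck$ lies inside these pairwise disjoint regions. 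So one call of $ClusterCheck$ costs $O(n+m)$, and the factor $k$ even disappears from the bound. Second, the $O(m)$ you charge for building $G^M$ is unjustified, but it is also unnecessary, since $G^M$ is an input of Algorithm~\ref{alg:findallminclust}.
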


Essentially, the theorem states that the complexity is quadratic in the number of vertices of $G$ and exponential in the number of vertices of $G^M$, the latter due to the worst case number of relevant paths in $G^M$ that need to be checked.

\begin{proof}
For a given st-DAG $G=(V,E)$, let $n:=|V|,m := |E|$. Likewise, for the associated MSP-DAG $G^M=(V^M,E^M)$, let $N:=|V^M|,M := |E^M|$.

The following information, needed during different stages of the cluster finding process, can be precomputed:
A topological order of $V$ (of $V^M$) can be computed, e.g.\ by Kahn's algorithm \citep{kahn:1962}, in time $O(n+m)$ (in time $O(N+M)$).
The sets of all vertices (forward) reachable from any given vertex $v \in V$, denoted by $FwReach(v)$, can be computed and output
in a single traversal of $G$ (in reverse topological order) in time $O(n^2)$
(for all $v$ at once!).
Symmetrically, all sets of vertices from which any vertex $v$ is reachable, denoted by $BwReach(v)$, can be computed and output by a single sweep (in topological order) in time $O(n^2)$.
During these two traversals, the immediate successor sets $Succ(v)$ and the immediate predecessor sets $Pred(v)$ of all vertices $v \in V$ can also be obtained and stored at no additional cost.
Therefore the overall cost of precomputations is $O(n^2)$.
Notice that the sets $Pred(v)$, $Succ(v)$, $FwReach(v)$ and $BwReach(v)$ can be stored compactly with the help of symbolic data structures (e.g.\ BDD \citep{bryant:1986}), and that set operations on them can be performed very efficiently.

The overall number of elementary operations (ops) needed to execute algorithm $FindAllMinClusters$ is:
\begin{eqnarray}
\label{eqn:complex}
&ops(compute \; all \; relevant \; paths) \; + & \\
&NumberO\!f\!RelevantPaths \cdot (ops(ClusterCheck)+ops(DeleteSuperpaths)) \nonumber&
\end{eqnarray}
where the cost for $ClosedCheck$ will have to be included in that for $ClusterCheck$.

The computation of all relevant paths of $G^M$ has time complexity $O(N \cdot 2^N)$ (see \ref{sec:computingpaths}).

The time complexity of algorithm $ClusterCheck$ is dominated by the forward-backward reachability scheme between the vertices of $S_\mathcal{X}$ and $P_\mathcal{Y}$ and by the calls to function $ClosedCheck$.
MSPs may have different sizes, therefore we define $k := \max \{ \max_{\mathcal{X} \in V^M} |S_{\mathcal{X}}|, \max_{\mathcal{Y} \in V^M} |P_\mathcal{Y}| \}$, where we typically have $k \ll n$.
In the worst case, the number of steps of the forward-backward reachability scheme
(in total, i.e.\ accounting for {\em all} iterations of the outer while-loop of $ClusterCheck$)
is $2k-1$ (i.e.\ $k$ forward steps and $k-1$ backward steps).
Since each of these steps involves an intersection operation with $P_\mathcal{Y}$ resp.\ $S_\mathcal{X}$, each having a maximum of $k$ elements,
the number of operations needed to complete the forward-backward reachability scheme is $O(k^2)$ (remember that all $FwReach(v)$ and $BwReach(v)$ have already been precomputed).
The maximum number of calls of function $ClosedCheck$ during one call of $ClusterCheck$ is  $O(k)$, and each of those calls needs at most $O(n^2)$ operations.
To see this, notice that the runtime of function $ClosedCheck$ is dominated by line~\ref{line:ClosedCheck8} (and similarly line~\ref{line:ClosedCheck11}).
In line~\ref{line:ClosedCheck8}, the union of predecessors of all vertices of the set $NonEntries$ needs to be computed, in order to be intersected with the complement of $VertexSet$.
Although the individual predecessor sets have already been precomputed, it takes $O(n^2)$ operations to build their union.
So the number of steps for all calls to $ClosedCheck$ together is $O(k \cdot n^2)$.
Therefore the overall time needed per call of $ClusterCheck$ is $O(k^2 +k\cdot n^2) = O(k  \cdot n^2)$.

For a particular path $p$ which was found to be a successful MSP-path, in the worst case up to $2^N-(n+2)$ superpaths need to be deleted (in the extreme case all other paths apart from $p$). However, during the the for-loop of $FindAllMinimalClusters$, every path will be deleted at most once, so the total number of MSP-path deletions is $O(2^N)$. This means that the complexity of MSP-path deletions is already taken into account by the factor $NumberO\!f\!RelevantPaths$ which is $2^N$.

Putting together these partial complexity results according to Equation~\ref{eqn:complex}, and noticing that $k \cdot n^2 > N$, the overall runtime complexity of $FindAllMinClusters$ becomes:
\begin{equation}
O(N \cdot 2^N) + 
O(2^N \cdot ((k \cdot n^2)+0))
= O(k \cdot n^2 \cdot 2^N)
\end{equation}
To this we have to add the cost $O(n^2)$ for the precomputations (see beginning of this subsection), such that the overall complexity of Algorithm~\ref{alg:findallminclust} becomes $O(n^2+k\cdot n^2 \cdot 2^N) = O(k\cdot n^2 \cdot 2^N)$.
\end{proof}

\section{Applications}
\label{sec:appl}

This section presents some experimental results, testing our minimal cluster finding algorithm on a wide range of st-DAGs and thereby collecting some interesting statistics.

\subsection{Generation of random DAGs}
\label{subsec:randgen}
In order to generate test cases for our minimal cluster finding algorithm, we use a random DAG generation scheme.
It starts with a 3-vertex DAG which gets expanded step by step until the desired number of vertices is reached.
The generation is steered by the following parameters:
\begin{itemize}
\item 
$n$: total number of vertices of the final DAG
\item 
$parexp \geq 0$: probability of parallel expansions
\item 
$serexp \geq 0$: probability of serial expansions
\item 
$maxwidth \in \mathbb{N}$: maximum width for parallel expansions resp.\ subgraph expansions
\item 
$clustsettle$: factor controlling the number of edges during subgraph expansion
\item 
$narb$: number of arbitrary (disruptive) edges to be added at the end
\end{itemize}
with the following constraints: $parexp + serexp \leq 1$ and $0 < clustsettle < 1.$

The algorithm starts with a DAG consisting of three vertices in series. The source vertex is $s$, the target vertex is $t$.
Then the following iteration steps are repeated until the desired number of vertices $n$ has been reached:
One vertex $x$ (other than $s$ or $t$) is chosen randomly to be expanded by one of the following three actions:
(1) With probability $parexp$, $x$ is replaced by $2 \leq k \leq maxwidth$ new vertices linked in parallel, where $k$ is chosen uniformly (remember that linked in parallel means that all new vertices have the same predecessors and successors which $x$ had).
(2) With probability $serexp$, $x$ is replaced by 2 serially linked vertices.
(3) With probability $1-parexp-serexp$, $x$ is replaced by a new subgraph of $k+l$ vertices where $k,l $ are chosen uniformly at random with $2 \leq k \leq maxwidth$ and $2 \leq l \leq maxwidth$. The subgraph has $k$ entry vertices and $l$ exit vertices,
where each entry vertex has an incoming edge from the predecessor of $x$ and each exit vertex has an outgoing edge to the successor of $x$.
Each entry vertex is connected to at least one exit vertex (chosen randomly), and each exit vertex can be reached by an edge from at least one entry vertex. Additional edges between the entry and exit vertices are added at random, such that the total number of edges of the new subgraph is larger than $max(k,l)+1$ but smaller than $k \cdot l \cdot clustsettle$.

At the end, once a DAG with $n$ vertices has been created by the above steps, $narb$ so-called arbitrary (disruptive) edges are added by randomly choosing a start and an end vertex, where it has to be ensured that the new disruptive edge does not lead to a cycle, that it is not redundant and that it does not make other already existing edges redundant. Therefore, when trying to add the new disruptive edge $(u,v)$, the following three conditions need to be checked:
(1) There must not be an existing path from $u$ to $v$.
(2) There must not be an existing edge from any ancestor $a$ of $u$ to $v$ (otherwise that edge $(a,v)$ would be made redundant).
(3) There must not be an existing edge from $u$ to any descendant $d$ of $v$ (otherwise that edge $(u,d)$ would be made redundant).

Notice that overshooting is possible, i.e.\ the resulting DAG might have slightly more vertices than the given $n$. This happens, if as the last expansion step parallel expansion or subgraph expansion is chosen and the number of added vertices overshoots $n$.
To be precise, in theory, the maximum possible number of vertices created is $n-1+2  \cdot maxwidth$, which occurs if the last step, starting from $n-1$ vertices, is subgraph expansion with the maximum possible values for $k$ and $l$.

An example DAG created by this scheme is shown in Figure~\ref{fig:exampleRandomDAG}.

\begin{figure}
    \centering
    \includegraphics[width=1.0\linewidth]{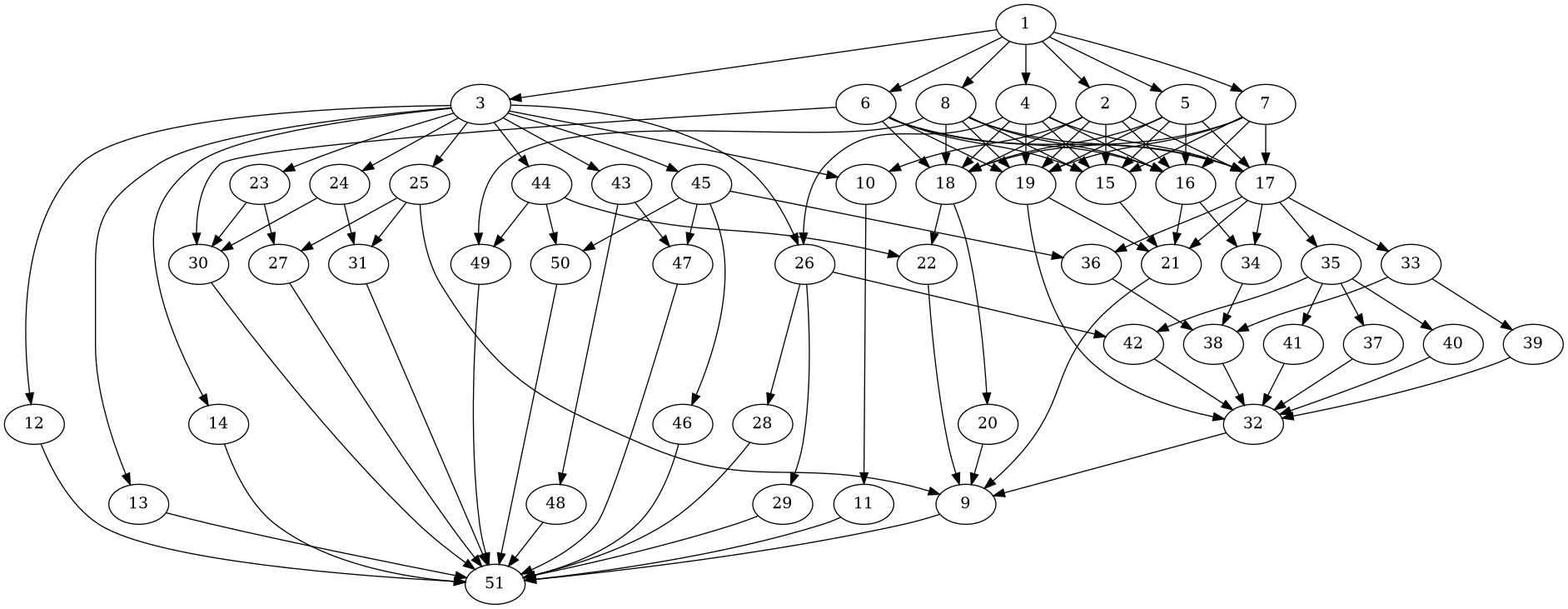}
    \caption{A randomly generated DAG. The generation parameters were
    $n=50$,
    $parexp = 0.33$,
    $serexp = 0.33$,
    $maxwidth = 5$,
    $clustsettle = 0.2$,
    $narb = 10$.
    % seed was 53145
    The 10 disruptive edges are 
    (44,22),
    (45,36),
    (25,9),
    (8,49),
    (6,30),
    (19,32),
    (25,27),
    (26,42),
    (16,34) and
    (4,26), added in that order.
    Notice the slight overshooting (resulting in 51 instead of 50 vertices).}
    \label{fig:exampleRandomDAG}
\end{figure}

\subsection{Experimental results}
\label{subsec:expresults}

In this section, we report on experiments with a prototype implementation of Algorithm~\ref{alg:findallminclust}, realized as a  single-threaded non-optimized python3 code (using the NetworkX package).
All experiments were run on a machine with an
Intel\circledR $\,$ Core i5-6400 processor
(2.7 GHz),
with 16 GB memory and 6 MB Intel\circledR $\,$ Smart Cache,  running Ubuntu 24.04.
The results are shown in Table~\ref{tab:results1}.

The experiments were run on randomly generated DAGs of diffent size, employing the DAG generation scheme of Section~\ref{subsec:randgen}.
The parameter $maxwidth$ was increased with growing number of vertices $n$, and similarly for the parameter $narb$.
This choice was motivated by the assumption that larger graphs should contain wider and bigger subgraphs than smaller ones, and they should also contain more arbitrary (disruptive) edges.
The other generation parameters were constant over all DAG sizes, with values as given in the caption of Table~\ref{tab:results1}.
In particular, serial, parallel and subgraph expansion occur with equal probabilities, which seems to be a good compromise.
Every row of the table was obtained from analyzing 100 random DAG samples.
However, for $n=200, \dots, 1000$ we ran two sets of experiments for each $n$.
While the statistics are, of course, very similar for each such pair of rows, there are some noticable differences, especially for the maximum runtimes in column (19). These differences stem from outliers which naturally occur since the DAGs are randomly generated.

From columns (4) and (7) of the table, we can see that the number of MSPs  $N = |V^M|$ is much smaller than the number of st-DAG vertices $n$ (very roughly only 10\%), 
which confirms our claim that dealing with the much smaller MSP-DAG is a major source of efficiency of our minimal cluster finding algorithm.

Columns (8) through (11) contain information about the number of minimal clusters found. These numbers are quite small, but more experiments (see below, at the end of this section) showed that other random DAG generation parameters can lead to a larger number of clusters.
Notice that working with these parameters, leading to relatively small numbers of clusters, doesn't generally make life easier for our algorithm, because those cases do not allow for many deletions of superpaths in line~\ref{line:FAMC16} of Algorithm~\ref{alg:findallminclust}, while superpath deletion obviously saves runtime.

The statistics on cluster sizes (rows (12) through (15)) were obtained over all 100 samples of each row.
The cluster size obviously varies a lot.
In each row, at least one cluster of minimal size 4 occured, as well as at least one cluster with more than $n$ vertices (in most cases this is the cluster containing all vertices of the DAG except the source $s$ and the target $t$) -- the variance of the cluster size thus being extremely large. Remember that cluster sizes larger than $n$ are of course possible because of the described overshooting effect during DAG generation.

\begin{landscape}

\begin{table}
    \centering
    \tiny{
    \begin{tabular}{||ccc||cccc||cccc|cccc|cccc||} \hline \hline
        \multicolumn{3}{||c||}{DAG $G$ generation parameters} 
        & \multicolumn{4}{c||}{$N=|V^M|$: number of MSPs per DAG } & \multicolumn{4}{c|}{number of clusters per DAG} & \multicolumn{4}{c|}{cluster sizes (vertices)} & \multicolumn{4}{c||}{runtime of Algo.~\ref{alg:findallminclust} per DAG (sec)}\\
        $n$ & $maxwidth$  & $narb$ & mean & var &  min & max & mean & var &  min & max & mean & var & min & max & mean & var & min & max \\

    (1) & (2)  & (3) & (4) & (5) & (6) & (7) & (8) & (9) & (10) & (11) & (12) & (13) & (14) & (15) & (16) & (17) & (18) & (19) \\ \hline \hline

    200 & 10  & 20 & 28.30 & 18.67 & 19 & 38 & 1.83 & 0.90 & 1 & 5 & 44.01 & 4718 & 4 & 206 & 0.081 & 0.003 & 0.015 & 0.277 \\ \hline

    200 & 10  & 20 & 28.59 & 26.58 & 15 & 45 & 1.65 & 0.85 & 1 & 6 & 42.45 & 4631 & 4 & 211 & 0.086 & 0.005 & 0.015 & 0.611 \\ \hline \hline

    400 & 13  & 40 & 48.37 & 49.19 & 29 & 68 & 2.18 & 1.47 & 1 & 6 & 49.07 & 12570 & 4 & 407 & 0.400 & 0.133 & 0.054 & 1.715 \\ \hline

    400 & 13  & 40 & 47.36 & 49.09 & 33 & 66 & 1.91 & 1.80 & 1 & 7 & 58.27 & 15881 & 4 & 410 & 0.427 & 0.167 & 0.069 & 2.674 \\ \hline \hline

    600 & 16  & 60 & 60.21 & 72.59 & 38 & 78 & 2.17 & 1.74 & 1 & 7 & 73.31 & 30177 & 4 & 615 & 0.854 & 0.597 & 0.156 & 4.221 \\ \hline

    600 & 16  & 60 & 60.94 & 79.16 & 36 & 78 & 2.19 & 1.49 & 1 & 6 & 69.04 & 27254 & 4 & 624 & 0.904 & 0.815 & 0.141 & 4.914 \\ \hline \hline

      800 & 19  & 80 & 72.66 & 84.32 & 53 & 97 & 2.06 & 1.50 & 1 & 7 & 88.57 & 51077 & 4 & 811 & 1.834 & 6.640 & 0.337 & 18.693 \\ \hline

      800 & 19  & 80 & 73.83 & 96.67 & 52 & 95 & 2.52 & 2.05 & 1 & 6 & 69.63 & 41166 & 4 & 813 & 1.767 & 2.077 & 0.316 & 6.403 \\ \hline \hline

      1000 & 22  & 100 & 78.36 & 115.87 & 56 & 102 & 2.14 & 1.88 & 1 & 8 & 80.60 & 64465 & 4 & 1031 & 2.147 & 3.515 & 0.446 & 9.920 \\ \hline

      1000 & 22  & 100 & 80.18 & 157.49 & 46 & 110 & 2.44 & 2.61 & 1 & 11 & 100.23 & 75202 & 4 & 1012 & 2.466 & 6.796 & 0.388 & 14.349 \\ \hline \hline \hline

      1500 & 26  & 150 & 106.78 & 181.61 & 78 & 147 & 2.39 & 2.22 & 1 & 8 & 112.03 & 135605 & 4 & 1523 & 6.939 & 55.36 & 1.058 & 51.267 \\ \hline 

      2000 & 31  & 200 & 125.05 & 290.45 & 91 & 183 & 2.74 & 2.89 & 1 & 9 & 101.05 & 172638 & 4 & 2033 & 14.295 & 220.18 & 1.557 & 76.050 \\ \hline 

      2500 & 36  & 250 & 136.87 & 372.59 & 89 & 215 & 2.71 & 2.65 & 1 & 8 & 120.56 & 245996 & 4 & 2537 & 23.982 & 912.75 & 2.730 & 227.175 \\ \hline 

      3000 & 39  & 300 & 153.35 & 287.01 & 107 & 194 & 2.86 & 4.26 & 1 & 9 & 212.93 & 556848 & 4 & 3061 & 34.150 & 3245.19 & 2.985 & 405.332 \\ \hline

    3500 & 42  & 350 & 170.37 & 428.25 & 127 & 210 & 3.07 & 3.27 & 1 & 8 & 244.89 & 698949 & 4 & 3536 & 63.823 & 7579.72 & 4.479 & 587.152 \\ \hline

      4000 & 45 & 400 & 185.96 & 671.58 & 112 & 256 & 3.19 & 4.13 & 1 & 11 & 158.99 & 558597 & 4 & 4029 & 75.317 & 8050.69 & 7.714 & 383.237 \\ \hline

      4500 & 48 & 450 & 199.65 & 731.59 & 148 & 271 & 3.06 & 4.56 & 1 & 10 & 196.58 & 765315 & 4 & 4530 & 106.679 & 17423.67 & 8.940 & 917.959 \\ \hline
      
    5000 & 50  & 500 & 211.07 & 497.93 & 147 & 265 & 3.68 & 6.20 & 1 & 12 & 159.34 & 706631 & 4 & 5020 & 139.014 & 44355.85 & 10.951 & 1638.558 \\ \hline \hline

%      x00 & xx  & xx & a & b & c & d & e & f & g & h & i & j & k & l & m & n & o & p \\ \hline \hline
    \end{tabular}
    }
    \caption{Experimental results showing the number of MSPs, the number of minimal clusters, their sizes and the runtime of Algorithm~\ref{alg:findallminclust} (the minimal cluster finding algorithm) for different DAG sizes. The number of random DAG samples per row was 100. The following graph generation parameters were fixed for all experiments in this table: $parexp=0.33$, $serexp=0.33$ and $clustsettle=0.4$}
    \label{tab:results1}
\end{table}
\end{landscape}

In the last four columns of Table~\ref{tab:results1}, information on the runtime of Algorithm~\ref{alg:findallminclust} is provided.
The increase in runtime with growing DAG size $n$ is initially very moderate, with some non-dramatic outliers as observed, for example, for $n=800$.
The absolute values of the runtime (at most a few seconds for up to $n=1000$, even on this rather low-end machine) confirm that our algorithm is efficient and useful for practice.

Figure~\ref{fig:plot}
graphically displays some of the information of Table~\ref{tab:results1}.
The measured mean runtime of Algorithm~\ref{alg:findallminclust} is shown in blue (solid line), and the measured mean number of MSPs in red.
For the runtime, we also fitted a function $f(n)=c_1 n^2 + c_2 \, n^2 \, 2^{N(n)}$ (by the least-squares method), which does not match exactly the expression of Theorem~\ref{thm:complexity}, but instead adds a purely quadratic dependence on $n$.
The computed coefficients are
$c1 = 4.882E{-}6$
% c1 = 4.882124916641209e-6
and
$c2 = 1.964E{-}70$
% c2 = 1.964111922544133e-70
(feature scaling was used during fitting in order to avoid overflow caused by the term $2^{N(n)}$).
This fitting shows that, 
in the plotted range, the exponential term does not yet play any significant part, but the runtime growth is largely quadratic.
Thus, the exponential growth of the runtime with respect to $N$, as stated by Theorem~\ref{thm:complexity}, cannot be observed yet for DAG sizes of up to $n=5000$, which of course is a good message.

Looking at the blue solid curve in Figure~\ref{fig:plot}, there seems to be a disturbance at the data point for $n=3500$. This could already be observed in Table~\ref{tab:results1} and is caused by some outliers, obviously stemming from some very ''tough'' DAGs generated randomly for $n=3500$ (see column (19) in Table~\ref{tab:results1}).
To support this presumption, we ran another experiment for $n=3500$ and identical parameters as the ones used in Table~\ref{tab:results1} (again containing 100 samples). Indeed, this time we observed a mean runtime of only
44.749 sec (which would avoid the bend in the solid blue line of Figure~\ref{fig:plot}),
and the maximum runtime was also much lower, namely only 338.488 sec.%(instead of the previous 587.152 sec).

As mentioned, the parameters used in Table~\ref{tab:results1} and Figure~\ref{fig:plot} led to a relatively small number of clusters. In order to show that DAGs may contain a much larger number of clusters, we ran another experiment for $n=1000$ (again consisting of 100 samples).
This time, the other parameters for the randomly generated DAGs were as follows:
$parexp=0.4$,
$serexp=0.4$,
$maxwidth=15$,
$clustsettle=0.2$
and $narb=15$.
This experiment led to a mean number of clusters of $22.23$ (as opposed to the earlier value $2.44$) and maximum cluster size of $34$ (as opposed to $11$).
The mean runtime for this set of experiments was 3.951 sec (as opposed to 2.466 sec).

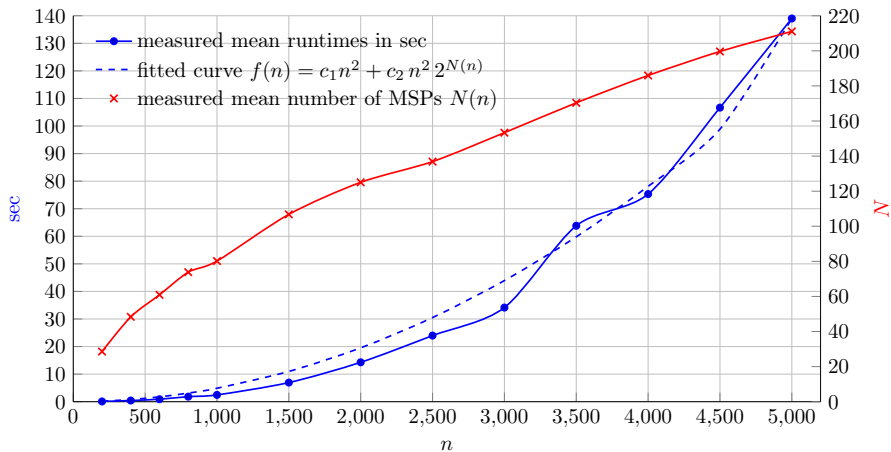
\begin{figure}[t]
\centering
\begin{tikzpicture}[scale=0.8, transform shape]

\begin{axis}[
  name=leftaxis,
  width=14cm,height=8cm,
  xmin=0,xmax=5200,
  xtick={0,500,...,5000},
  xlabel={$n$},
  ymin=0,ymax=140,
  ytick={0,10,...,140},
  ylabel={\color{blue} sec},
  axis y line*=left,
  axis x line*=bottom,
  grid=both,
  legend style={at={(0.02,0.98)},anchor=north west,draw=none,fill=none},
  legend cell align=left,
]

\addplot+[
  blue,
  mark=*,
  mark size=1.5pt,
  smooth,
  thick,
] coordinates {
  (200,0.086) (400,0.427) (600,0.904) (800,1.834) (1000,2.466)
  (1500,6.939) (2000,14.295) (2500,23.982) (3000,34.150)
  (3500,63.823) (4000,75.317) (4500,106.679) (5000,139.014)
};
\addlegendentry{measured mean runtimes in $\mathrm{sec}$}

% fitted(n) = c1*n^2 + c2*n^2*2^{N(n)}: blue dashed line, no markers
% (values precomputed using the given N(n) list and fitted constants c1,c2)
\addplot+[
  blue,
  dashed,
  smooth,
  thick,
  mark=none,
] coordinates {
  (200,0.1952849967)
  (400,0.7811399867)
  (600,1.7575649700)
  (800,3.1245599467)
  (1000,4.8821249166)
  (1500,10.9847810624)
  (2000,19.5284996666)
  (2500,30.5132807290)
  (3000,43.9391242498)
  (3500,59.8060302289)
  (4000,78.1139989661)
  (4500,98.8680440881)
  (5000,139.0163089650)
};
\addlegendentry{fitted curve $f(n)=c_1n^2+c_2\,n^2\,2^{N(n)}$}

\addlegendimage{red,thick,mark=x,only marks,mark size=2.5pt}
\addlegendentry{measured mean number of MSPs $N(n)$}

\end{axis}

% --- Right axis: N(n) (red), NO legend here ---
\begin{axis}[
  at={(leftaxis.south west)},
  anchor=south west,
  width=14cm,height=8cm,
  xmin=0,xmax=5200,
  xtick={0,500,...,5000},
  ymin=0,ymax=220,
  ytick={0,20,...,220},
  ylabel={\color{red} $N$},
  axis y line*=right,
  axis x line=none,
]

\addplot+[
  red,
  mark=x,
  mark size=2.5pt,
  smooth,
  thick,
] coordinates {
  (200,28.59) (400,48.37) (600,60.94) (800,73.83) (1000,80.18)
  (1500,106.78) (2000,125.05) (2500,136.87) (3000,153.35)
  (3500,170.37) (4000,185.96) (4500,199.65) (5000,211.07)
};

\end{axis}

\end{tikzpicture}
\caption{The measured mean runtimes of Algorithm~\ref{alg:findallminclust} (blue solid) and the measured mean number of MSPs (red). Notice the two different vertical axes for the blue and red curves.
Also shown is a closed-form fitted function $f(n)=c_1n^2+c_2\,n^2\,2^{N(n)}$ for the mean runtimes (blue dashed).
Thus, the fitting introduces an alternative quadratic term to the theoretical result from Theorem~\ref{thm:complexity}.
Since
$c1 = 4.882E{-}6$
and  $c2 = 1.964E{-}70$,
the second term is negligible for the plotted range of $n$.
}
\label{fig:plot}
\end{figure}

\section{Conclusion}
\label{sec:conclusion}

This paper presented and evaluated a new algorithm for finding minimal clusters in st-DAGs, where a cluster is defined as a special type of subgraph whose entry and exit borders constitute a type of synchronisation point or barrier.
This is a practically relevant problem, since DAGs are used in many areas of application, and the structural analysis of large DAGs is difficult.
Identifying this type of subgraphs enables a divide-and-conquer approach and thereby makes the analysis of large DAGs much more efficient.

Technically, the key point of our approach is the concept of syncpoints, in particular maximum syncpoints (MSPs), where we also described an efficient algorithm of how to find them.
We have introduced a minimal cluster finding algorithm, providing detailed pseudo code and some hints for efficient implementation.
The algorithm doesn't work directly on the given DAG, but on the so-called maximum syncpoint DAG (MSP-DAG), which in general is much smaller.
This is the main source of efficiency of the algorithm.
The paper also contains a
%comprehensive/
rigorous complexity analysis, whose result is supported by a set of non-trivial experiments.

As future work, we are planning to exploit the capabilities of our minimal cluster finding algorithm in the context of a project planning toolset which allows for the prediction of project timelines in a realistic environment with uncertainties and resource limitations.

\phantom{blabla}

\noindent{\bf Acknowledgements:}
The authors would like to cordially thank their colleague Fabian Michel for valuable comments on earlier versions of the manuscript, especially on the complexity analysis.

\appendix
\section{Computing all relevant paths}
\label{sec:computingpaths}

\begin{algorithm}[ht]
    \caption{$ComputeAllVertexPaths$ computes all vertex-paths of any length in a DAG $G=(V,E)$}\label{alg:computeallvpaths}
    \begin{algorithmic}[1]
        \State {\bf function} $ComputeAllVertexPaths$ ($G$: st-DAG) {\bf returns} void
    \For{all $v \in V$}
    \State initialize $PathsTo[v] := [[v]]$
    \State /// initialize path list ending in $v$ with a single path of length zero
    \EndFor
    \For{all $u\in V$ in topological order}
    \label{line:cavp7}
    \For{each successor $v$ of $u$}
    \label{line:cavp8}
    \For{each path $p$ in $PathsTo[u]$}
    \label{line:cavp9}
    \State create a new path $p_{new}$ by appending $v$ to $p$
    \label{line:cavp10}
    \State add $p_{new}$ to the list $PathsTo[v]$
    \label{line:cavp11}
    \EndFor
    \EndFor
    \EndFor
    \State /// now each list $PathsTo[v]$ contains all paths of any length ending in $v$
    \end{algorithmic}
\end{algorithm}

For a given DAG, the set of all vertex-paths (of any length) can be computed by the algorithm $ComputeAllVertexPaths$ shown in Algorithm~\ref{alg:computeallvpaths}. In our work, the algorithm is executed not on the st-DAG $G=(V,E)$, but on the associated MSP-DAG $G^M = (V^M,E^M)$, which usually is much smaller. However, for simplicity we still use the notation $G=(V,E)$ in this appendix. After running the algorithm on $G^M$, only those paths that start in a FSP or FHSP and that end in a FSP or a BHSP are relevant for our purposes, all other paths are not relevant. Therefore, after running the algorithm $ComputeAllVertexPaths$ on $G^M$, we discard all lists $PathsTo[\mathcal{Y}]$ where $\mathcal{Y}$ is a FHSP. Furthermore, in the remaining lists, we delete all paths that start with a BHSP. The resulting lists then contain all relevant vertex paths in $G^M$ whose start and end vertices may constitute the brackets of a cluster.
By construction, each of these lists $PathsTo[\mathcal{Y}]$ is already sorted (in sections) by increasing path length, which is important later for deleting superpaths efficiently.

We now analyze the runtime complexity of algorithm $ComputeAllVertexPaths$ (shown in Algorithm~\ref{alg:computeallvpaths}). The worst-case scenario, i.e.\ the DAG with the maximum number of paths, is a DAG with vertex set $V=\{ 1,\dots, n \}$ and the maximum possible edge set $E=\{(k,l) \; | \; k,l \in V \; \wedge \; l \geq k \}$.
For this graph, the topological sorting  yields the (unique) order $1 \prec 2 \prec \dots \prec n$.
Going through the for-loops of line~\ref{line:cavp7} and line~\ref{line:cavp8}, vertex $u=k$ has $(n-k)$ successors $v$. Vertex $v$ is appended to each path found in $PathsTo[u]$ in line~\ref{line:cavp10}. Since $PathsTo[u]$ contains $\binom{k-1}{j-1}$ paths of length $j-1$, where $j$ is in the range $\{ 1, \dots, k \}$,
and since the new paths $p_{new}$ have length $(j-1)+1= j$,
this requires $\sum_{j=1}^{k} \binom{k-1}{j-1} \cdot j$ elementary operations per successor.
So altogether, the number of operations for executing $ComputeAllVertexPaths$ on this worst-case DAG with $n$ vertices is
\begin{equation}
\sum_{k=1}^{n-1} (n-k)\sum_{j=1}^{k} \binom{k-1}{j-1} \cdot j
 \label{eq:bigcomplex}
\end{equation}
The expression in Equation~\ref{eq:bigcomplex} can be simplified to $(2n-4)2^{n-2}+1$ (by first replacing the inner sum by the expression $(k+1)2^{k-2}$ and then multiplying out and applying geometric series arguments). Thus, the time complexity of Algorithm~\ref{alg:computeallvpaths} is $O(n \cdot 2^n)$. Note that this is the complexity of {\em computing} all paths of any length in a DAG with $n$ vertices (which also involves outputting each of these paths), whereas the {\em number of paths} of any length is only at most $2^n-(n+1)=O(2^n)$, as can be shown by a simple combinatorial argument looking at the worst-case DAG mentioned above.
We have hereby confirmed the well-known result
that the number of paths in a DAG can be exponential in the number of vertices. But it is important to again emphasize that we are applying the algorithm $ComputeAllVertexPaths$ not to the st-DAG but to the usually much smaller MSP-DAG!

%%\begin{thebibliography}{00}
%%\end{thebibliography}
\bibliographystyle{elsarticle-num-names}
\bibliography{references.bib}

\end{document}